\documentclass{llncs}
\pagestyle{plain}


\usepackage[utf8]{inputenc}
\usepackage[T1]{fontenc}
\usepackage{lmodern}
\usepackage{standalone}
\usepackage{textgreek}
\usepackage{microtype}
\usepackage[bookmarks,bookmarksopen,bookmarksdepth=2,colorlinks,unicode]{hyperref}
\usepackage{xspace}
\usepackage{float}
\usepackage{appendix}
\usepackage{pgfplots}\pgfplotsset{compat=1.18}
\usepackage{amsmath}
\usepackage{amsfonts}
\usepackage{amssymb}
\usepackage{braket}
\usepackage{mathtools}
\usepackage{siunitx}
\usepackage{enumitem}
\usepackage{algorithm}
\usepackage[noend]{algpseudocode}
\usepackage{array}
\usepackage{booktabs}
\usepackage{cleveref}
\usepackage{cite}
\usepackage{comment}
\usepackage{doi}
\usepackage{color}

\usetikzlibrary{cd}

\setlist[enumerate,1]{label={(\arabic*)}}


\makeatletter
\newcounter{algorithmicH}
\let\oldalgorithmic\algorithmic
\renewcommand{\algorithmic}{%
    \stepcounter{algorithmicH}
    \oldalgorithmic}
\renewcommand{\theHALG@line}{ALG@line.\thealgorithmicH.\arabic{ALG@line}}
\makeatother

\allowdisplaybreaks

\spnewtheorem{thm}[theorem]{Theorem}{\bfseries}{\itshape}
\spnewtheorem{defn}[theorem]{Definition}{\bfseries}{\itshape}
\spnewtheorem{prop}[theorem]{Proposition}{\bfseries}{\itshape}
\spnewtheorem{lem}[theorem]{Lemma}{\bfseries}{\itshape}
\spnewtheorem{cor}[theorem]{Corollary}{\bfseries}{\itshape}
\spnewtheorem{hyp}[theorem]{Hypothesis}{\bfseries}{\itshape}
\spnewtheorem{ex}[theorem]{Example}{\bfseries}{}
\spnewtheorem{rem}[theorem]{Remark}{\bfseries}{}

\crefname{thm}{theorem}{theorems}
\Crefname{thm}{Theorem}{Theorems}
\crefname{defn}{definition}{definitions}
\Crefname{defn}{Definition}{Definitions}
\crefname{prop}{proposition}{propositions}
\Crefname{prop}{Proposition}{Propositions}
\crefname{lem}{lemma}{lemmas}
\Crefname{lem}{Lemma}{Lemmas}
\crefname{cor}{corollary}{corollaries}
\Crefname{cor}{Corollary}{Corollaries}
\crefname{ex}{example}{examples}
\Crefname{ex}{Example}{Examples}
\crefname{rem}{remark}{remarks}
\Crefname{rem}{Remark}{Remarks}
\crefname{hyp}{hypothesis}{hypotheses}
\Crefname{hyp}{Hypothesis}{Hypotheses}

\makeatletter
\newcommand{\subalign}[1]{%
  \vcenter{%
    \Let@ \restore@math@cr \default@tag
    \baselineskip\fontdimen10 \scriptfont\tw@
    \advance\baselineskip\fontdimen12 \scriptfont\tw@
    \lineskip\thr@@\fontdimen8 \scriptfont\thr@@
    \lineskiplimit\lineskip
    \ialign{\hfil$\m@th\scriptstyle##$&$\m@th\scriptstyle{}##$\hfil\crcr
      #1\crcr
    }%
  }%
}
\makeatother


\newcommand{\F}{\mathbb{F}}                                         
\newcommand{\Fp}{\F_{p}}                                            
\newcommand{\Fq}{\F_{q}}                                            
\newcommand{\Fqn}{\Fq^{n}}                                          
\newcommand{\Fqnxn}{\Fq^{n \times n}}                               
\newcommand{\Fqm}{\Fq^{m}}                                          
\newcommand{\Fqen}{\F_{q^{n}}}                                      
\newcommand{\Fqx}{\Fq^\times}                                       
\DeclareMathOperator{\Orth}{Orth}                                   

\DeclareMathOperator{\id}{id}                                       
\DeclareMathOperator{\LP}{LP}                                       
\DeclareMathOperator{\Sym}{Sym}                                     
\DeclareMathOperator{\Tr}{Tr}                                       





\DeclareMathOperator{\CORR}{CORR}                                   
\DeclareMathOperator{\prob}{\mathbb{P}}                             

\newcommand*{\degree}[1]{\deg \left( #1 \right)}                    


\DeclareMathOperator{\wt}{wt}                                       



\newcommand{\Arion}{\textsf{Arion}\xspace}
\newcommand{\ArionHash}{\textsf{ArionHash}\xspace}


\newcommand{\Griffin}{\textsc{Griffin}\xspace}



\newcommand{\LowMC}{\texttt{LowMC}\xspace}



\newcommand{\Plonk}{\textsf{Plonk}\xspace}

\newcommand{\Poseidon}{\textsc{Poseidon}\xspace}

\newcommand{\RoneCS}{\textsf{R1CS}\xspace}

\newcommand{\ReinforcedConcrete}{\texttt{Reinforced Concrete}\xspace}

\begin{document}
    \title{Generalized Triangular Dynamical System: An Algebraic System for Constructing Cryptographic Permutations over Finite Fields}
    \author{Arnab Roy \and
            Matthias Johann Steiner}
    \institute{Alpen-Adria-Universit\"at Klagenfurt, Universit\"atsstraße 65-67, 9020 Klagenfurt am W\"orthersee, Austria  \\ \email{\{arnab.roy,matthias.steiner\}@aau.at}}

    \maketitle

    \begin{abstract}
        In recent years a new class of symmetric-key primitives over $\mathbb{F}_p$ that are essential to Multi-Party Computation and Zero-Knowledge Proofs based protocols have emerged.
        Towards improving the efficiency of such primitives, a number of new block ciphers and hash functions over $\mathbb{F}_p$ were proposed.
        These new primitives also showed that following alternative design strategies to the classical Substitution-Permutation Network (SPN) and Feistel Networks leads to more efficient cipher and hash function designs over $\mathbb{F}_p$ specifically for large odd primes $p$.

        In view of these efforts, in this work we build an \emph{algebraic framework} that allows the systematic exploration of viable and efficient design strategies for constructing symmetric-key (iterative) permutations over $\mathbb{F}_p$.
        We first identify iterative polynomial dynamical systems over finite fields as the central building block of almost all block cipher design strategies.
        We propose a generalized triangular polynomial dynamical system (GTDS), and based on the GTDS we provide a generic definition of an iterative (keyed) permutation over $\mathbb{F}_p^n$.

        Our GTDS-based generic definition is able to describe the three most well-known design strategies, namely SPNs, Feistel networks and Lai--Massey.
        Consequently, the block ciphers that are constructed following these design strategies can also be instantiated from our generic definition.
        Moreover, we find that the recently proposed \texttt{Griffin} design, which neither follows the Feistel nor the SPN design, can be described using the generic GTDS-based definition.
        We also show that a new generalized Lai--Massey construction can be instantiated from the GTDS-based definition.

        We further provide generic analysis of the GTDS including an upper bound on the differential uniformity and the correlation.
    \end{abstract}

    \section{Introduction}
    Constructing (keyed and unkeyed) permutations is at the center of designing some of the most broadly used cryptographic primitives like block ciphers and hash functions.
    After half a century of research, Feistel and Substitution-Permutation Networks (SPN) have emerged as the two dominant iterative design strategies for constructing unkeyed permutations or block ciphers.
    Another notable, although not much used design strategy is the Lai--Massey construction. Altogether, SPN, Feistel and Lai--Massey are at the core of some of the most well-known block ciphers such as AES \cite{AES-NIST,Daemen-AES}, DES \cite{DES77}, CLEFIA \cite{FSE:SSAMI07}, IDEA \cite{EC:LaiMas90}, etc.

    In the past few years, a new class of symmetric-key cryptographic functions (block ciphers, hash functions and stream ciphers) that are essential in privacy preserving
    cryptographic protocols based on Multi-Party Computation and Zero-Knowledge Proofs, have emerged.
    For efficiency reasons these primitives are designed over $\Fp$ (for large $p > 2$) as opposed to the classical symmetric primitives over $\F_{2^n}$ (typically for small $n$ e.g. $n \leq 8$). Following the classical approaches, a number of such symmetric-key functions were constructed either by utilizing the SPN or Feistel design principles.
    However, current research suggests that these traditional strategies are not the best choices for efficient primitives over $\Fp$.
    For example, the partial SPN-based hash function \Poseidon \cite{USENIX:GKRRS21} performs more efficiently in \RoneCS or \Plonk prover circuits than the generalized unbalanced Feistel-based construction \texttt{GMiMCHash} \cite{ESORICS:AGPRRRRS19}.
    Another recently proposed design - \Griffin \cite{Griffin}, follows neither SPN nor Feistel, and is more efficient in circuits than \texttt{GMiMCHash} and \Poseidon. In the literature these new primitives are often called Arithmetization Oriented (AO) primitives.

    An important and relevant question here is thus: {\em What is the space of possible design strategies for constructing (efficient) symmetric-key cryptographic permutations/functions over $\Fp$? And how to explore the possible design strategies systematically? }

    Moreover, given that such new cryptographic functions are inherently algebraic by design, their security is dictated by algebraic cryptanalytic techniques. For example, algebraic attacks (interpolation, Gr\"obner basis, GCD, etc.) \cite{AC:EGLORSW20,AC:ACGKLRS19,AC:AGRRT16,SAC:RoyAndSau20} are the main attack vectors in determining the security of \texttt{GMiMC}, \Poseidon, \texttt{MiMC} \cite{AC:AGRRT16}, etc.

    A well-defined generic algebraic design framework will prescribe a systematic approach towards exploring viable and efficient design strategies over $\Fp$.
    Such a generic framework will allow the design of new symmetric-key primitives and will shed new light into the algebraic properties of SPN- and Feistel-based designs, among others, over $\Fp$.
    A ``good'' generic framework should ultimately allow instantiation of primitives over $\Fq$ where $q = p^n$ for arbitrary primes $p$ and naturally encompass existing classical design strategies, such as SPN, Feistel and Lai--Massey.

    The primary aim of this work is to find such a general framework which describes iterative algebraic systems for constructing arithmetization oriented (keyed or unkeyed) permutations.

    \subsubsection{Study of generic frameworks and our work.} The study of generic frameworks for cryptographic constructions and their generic security analysis is a topic of high impact. It allows designers to validate their design strategies and gives recipes for possible design and analysis optimization advancements. Examples of research on generic design frameworks includes the studies on Even-Mansour (EM) design variants \cite{AC:Dutta20, C:CLLSS14, EC:CogSeu15, EC:DunKelSha12, FSE:FarPro15}, the generic security analysis of SPN \cite{C:LiuTesVai21}
    constructions, Sponge construction and variants thereof \cite{FSE:ADMA15, SAC:BDPV11, EC:GazTes16, C:FreGhoKom22}, etc. However, none of these works took an arithmetization oriented approach which might be due to the lack of practical applications for AO primitives in the past.

    The generic framework and its analysis of our work is based on the properties of polynomials over finite field $\Fq$. We can say that the EM construction or general SPN or Feistel constructions considered in previous works are much more generic in comparison to our proposed framework. For the cryptographic analysis
    in this work we only exploit the statistical (e.g. correlation, differential) and algebraic (polynomial degree) properties. This approach is comparable
    to the (statistical) security analysis \cite{C:LiuTesVai21} of the generic SPN.

    \subsection{Our Results}
    In this paper we lay out a generic strategy for constructing cryptographic (keyed and unkeyed) permutations combined with security analysis against differential cryptanalysis.

    We first discuss (\Cref{sec:blockcipher-permpoly}) that so-called orthogonal systems are the only polynomial systems suitable to represent (keyed) permutations and henceforth block ciphers over finite fields.

    We then propose a novel algebraic system (in \Cref{sec:gtds_framework}) that is the foundation for constructing generic iterative permutations.
    More specifically, we construct a polynomial dynamical system over a finite field $\Fq$ (where $q = p^n$ with $p$ a prime and $n \geq 1$) that we call Generalized Triangular Dynamical System (GTDS).
    We then provide a generic definition of iterative (keyed) permutations using the GTDS and a linear/affine permutation.
    We show (in \Cref{Sec: examples of block ciphers}) that our GTDS-based definition of iterative permutations is able to describe the SPN, different types of Feistel networks and the Lai--Massey construction. Consequently, different block ciphers that are instantiations of these design strategies can also be instantiated from the GTDS-based permutation.

    Beyond encompassing these well-known design strategies, our framework provides a systematic way to study different algebraic design strategies and security of permutations (with or without key). This is extremely useful in connection with the recent design efforts for constructing block ciphers and hash function over $\Fp$ where $p$ is a large prime. For example, GTDS already covers the recently proposed partial SPN design strategy \cite{EC:GLRRS20} used in designing block ciphers and hash functions \cite{USENIX:GKRRS21}.

    Our GTDS-based definition of iterative permutations allows for instantiations of new (keyed) permutations.
    For example, the recently proposed construction \Griffin can also be instantiated from our generic definition of an iterative permutation.
    Moreover, using our generic definition we propose a generalization (\Cref{Sec: Lai--Massey}) of the Lai--Massey design strategy.
    A new efficient and secure cryptographic permutation (and hash function) \cite{Arion} with low multiplicative complexity is also instantiated from our generic definition.

    In \Cref{Sec: analysis of GTDS} we perform a generic analysis to bound the differential uniformity as well as the correlation of the GTDS.

    Our generic constructions, definitions and results holds for arbitrary $p$. However, our main aim is to propose an algebraic
    framework for constructing primitives and provide generic (security) analysis over $\Fp$ for (large) $p > 2$. The security analysis given in this paper can be refined and improved for $p = 2$. Our (security) analysis is not aimed for binary extension field and should be viewed as
    generic analysis for $p > 2$. However, the GTDS-based construction(s) proposed in this paper can be applied over $\Fq$ (where $q = p^n$ with $p$ a prime and $n \geq 1$).

    \section{Block Ciphers and Permutation Polynomials} \label{sec:blockcipher-permpoly}
    In general a deterministic (block) cipher can be described as a pair of keyed mappings
    \begin{equation}
        F: \mathcal{M} \times \mathcal{K} \to \mathcal{C}, \qquad F^{-1}: \mathcal{C} \times \mathcal{K} \to \mathcal{M},
    \end{equation}
    where $\mathcal{M}$, $\mathcal{K}$ and $\mathcal{C}$ denote the message, key and cipher domain and such that $F^{-1}(\_, \mathbf{k}) \circ F(\_, \mathbf{k}) = \id_\mathcal{M}$ for all $\mathbf{k} \in \mathcal{K}$.
    In practice the domains $\mathcal{M}$, $\mathcal{K}$ and $\mathcal{C}$ are finite, thus by \cite[Theorem~72]{Bard-AlgebraicCryptanalysis} any cipher can be modeled as a mapping between vector spaces over finite fields.
    In this work we will assume that $\mathcal{M} = \mathcal{C} = \Fqn$ and $\mathcal{K} = \Fq^{n \times r}$, where $r, n \geq 1$, $q$ is a prime power and $\Fq$ is the field with $q$ elements.
    For a block cipher we also require that $F$ is a keyed permutation over $\Fqn$, i.e., for all $\mathbf{k} \in \Fq^{n \times r}$ the function $F(\_, \mathbf{k})$ is a permutation.
    Note that for any function $F: \Fqn \to \Fq$ we can find via interpolation a unique polynomial $P \in \Fq[x_1, \dots, x_n]$ with degree less than $q$ in each variable such that $F(\mathbf{x}) = P(\mathbf{x})$ for all $\mathbf{x} \in \Fqn$.
    Therefore, we will also interpret all ciphers as vectors of polynomial valued functions.
    We recall the formal (algebraic) notion of polynomial vectors that induce a permutation.
    \begin{defn}[See {\cite[7.34., 7.35. Definition]{Niederreiter-FiniteFields}}]
        Let $\Fq$ be a finite field.
        \begin{enumerate}
            \item A polynomial $f \in \Fq [x_1, \allowbreak \dots, x_n]$ is called a permutation polynomial if the equation $f (x_1, \allowbreak \dots, x_n) = \alpha$ has $q^{n - 1}$ solutions in $\Fqn$ for each $\alpha \in \Fq$.

            \item A system of polynomials $f_1, \dots, \allowbreak f_m \in \Fq[x_1, \dots, x_n]$, where $1 \leq m \leq n$,  is said to be orthogonal if the system of equations $f_1( x_1, \dots, x_n) = \alpha_1, \dots, \allowbreak f_m(x_1, \dots, x_n) = \alpha_m$ has exactly $q^{n - m}$ solutions in $\Fqn$ for each $(\alpha_1, \dots, \allowbreak \alpha_m) \in \F_{q}^{m}$.
        \end{enumerate}
    \end{defn}

    Over $\F_2$, the permutation polynomials are known as balanced functions \cite{FSE:BouDeCDeC11} in the cryptography/computer science literature.

    \begin{rem}
        It is immediate from the definition that every subset of an orthogonal system is also an orthogonal system.
        In particular, every polynomial in an orthogonal system is also a multivariate permutation polynomial.
        If for an orthogonal system $m = n$, then the orthogonal system induces a permutation on $\Fqn$.
        Moreover, if we restrict orthogonal systems to the $\Fq$-algebra of polynomial valued functions $\Fq [x_1, \dots, x_n] / (x_1^q - x_1, \dots, x_n^q - x_n)$, that is the polynomials with degree less than $q$ in each variable, then the orthogonal systems of size $n$ form a group under composition.
        If we denote this group with $\Orth_n \left( \Fq \right)$, then one can establish the following isomorphisms of groups $\Orth_n \left( \Fq \right) \cong \Sym \left( \Fqn \right) \cong \Sym \left( \Fqen \right)$, where $\Sym \left( \_ \right)$ denotes the symmetric group (cf.\ \cite[7.45. Corollary]{Niederreiter-FiniteFields}).
    \end{rem}

    Since one of our main interests is in keyed permutations let us extend the definition of orthogonal systems.
    In general, we will denote with $x$ the plaintext variables and with $y$ the key variables.
    \begin{defn}\label{Def: keyed permutations}
        Let $\Fq$ be a finite field.
        \begin{enumerate}
            \item Let $F: \Fq^{n_1} \times  \Fq^{n_2} \to \Fq^{n_1}$ be a function. We call $F$ a keyed permutation, if for any fixed $\mathbf{y} \in \Fq^{n_2}$ the function $F(\_, \mathbf{y}): \Fq^{n_1} \to \Fq^{n_1}$ induces a permutation.

            \item Let $f_1, \dots, f_m \in \Fqn[x_1, \dots, x_{n_1}, y_1, \dots, y_{n_2}]$, where $1 \leq m \leq n_1$ be polynomials. We call $f_1, \dots, f_m$ a keyed orthogonal system, if for any fixed $(y_1, \dots, y_{n_2}) \in \Fq^{n_2}$ the system $f_1, \dots, f_m$ is an orthogonal system.
        \end{enumerate}
    \end{defn}
    \begin{rem}
        \begin{enumerate}
            \item Note that in our definition we allow for trivial keyed permutations, i.e., permutations that are constant in the key variable.
            In particular, every permutation $F: \Fqn \to \Fqn$ induces a keyed permutation $\hat{F}: \Fqn \times \Fqm \to \Fqm$ via $\hat{F} (\mathbf{x}, \mathbf{y}) = F (\mathbf{x})$ for any $m \in \mathbb{Z}_{\geq 1}$.

            \item A keyed orthogonal system is also an orthogonal system in $\Fq[x_1, \dots, x_{n_1}, y_1, \allowbreak \dots, y_{n_2}]$.
            Suppose we are given a keyed orthogonal system $f_1, \dots, f_m \in \Fq[x_1, \dots, x_{n_1}, y_1, \dots, \allowbreak y_{n_2}]$ and equations $f_i(\mathbf{x}, \mathbf{y}) = \alpha_i$, where $\alpha_i \in \Fq$.
            If we fix $\mathbf{y}$ then we have $q^{n_1 - m}$ many solutions for $\mathbf{x}$.
            There are $q^{n_2}$ possible choices for $\mathbf{y}$, so the system has $q^{n_1 + n_2 - m}$ solutions.
            Hence, our definition of keyed orthogonal systems does not induce any essentially new structure, it is merely semantic.
        \end{enumerate}
    \end{rem}
    As intuition suggests keyed orthogonal systems are well-behaved under iteration. We state the following theorem for completeness.
    \begin{thm}\label{Th: keyed orthogonal systems under iteration}
        Let $\Fq$ be a finite field.
        The keyed polynomial system $f_1, \dots, f_m \in \Fq[x_1, \dots, x_{n_1}, \allowbreak y_1, \allowbreak \dots, y_{n_2}]$ is keyed orthogonal if and only if the system $g_1 (f_1, \dots, \allowbreak f_m, \allowbreak y_1, \dots, y_{n_2}), \dots, \allowbreak g_m (f_1, \allowbreak \dots, f_m, y_1, \dots, y_{n_2}) \in \Fq[x_1, \dots, x_{n_1}, y_1, \allowbreak \dots, y_{n_2}]$ is keyed orthogonal for every keyed orthogonal system $g_1, \dots, g_m \in \Fq[x_1, \dots, x_m, \allowbreak y_1, \dots, \allowbreak y_{n_2}]$.
    \end{thm}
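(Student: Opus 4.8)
The plan is to verify keyed orthogonality one key at a time, thereby reducing the whole statement to an elementary fiber-counting argument on the induced functions. For a fixed key $\mathbf{y} \in \Fq^{n_2}$, let $\phi_{\mathbf{y}} : \Fq^{n_1} \to \Fq^m$ denote the function induced by $\bigl( f_1(\_, \mathbf{y}), \dots, f_m(\_, \mathbf{y}) \bigr)$ and let $\psi_{\mathbf{y}} : \Fq^m \to \Fq^m$ denote the function induced by $\bigl( g_1(\_, \mathbf{y}), \dots, g_m(\_, \mathbf{y}) \bigr)$. Since keyed orthogonality is by \Cref{Def: keyed permutations} nothing but orthogonality of the induced system for each fixed $\mathbf{y}$, and since the polynomial composition $g_i(f_1, \dots, f_m, y_1, \dots, y_{n_2})$ induces exactly the function composition $\psi_{\mathbf{y}} \circ \phi_{\mathbf{y}}$, it suffices to analyse this composition key by key.

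For the forward implication I would argue as follows. Assuming $f_1, \dots, f_m$ is keyed orthogonal, every fiber $\phi_{\mathbf{y}}^{-1}(\boldsymbol{\alpha})$ has cardinality $q^{n_1 - m}$. Because $g_1, \dots, g_m$ is a keyed orthogonal system of size $m$ in the $m$ plaintext variables $x_1, \dots, x_m$, the preceding remark shows that $\psi_{\mathbf{y}}$ is a bijection of $\Fq^m$. Hence for every target $\boldsymbol{\beta} \in \Fq^m$ one has $(\psi_{\mathbf{y}} \circ \phi_{\mathbf{y}})^{-1}(\boldsymbol{\beta}) = \phi_{\mathbf{y}}^{-1}\bigl( \psi_{\mathbf{y}}^{-1}(\boldsymbol{\beta}) \bigr)$, and since $\psi_{\mathbf{y}}^{-1}(\boldsymbol{\beta})$ is a single point this fiber again has cardinality $q^{n_1 - m}$. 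As $\mathbf{y}$ and $\boldsymbol{\beta}$ were arbitrary, the composed system is keyed orthogonal.

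For the backward implication the decisive observation is that the identity is an admissible choice of $g$. Taking $g_i = x_i$ for $1 \leq i \leq m$ gives a keyed orthogonal system, since for every fixed key the system $x_i = \alpha_i$ has exactly $q^{m - m} = 1$ solution in $\Fq^m$, and the associated composition $g_i(f_1, \dots, f_m, y_1, \dots, y_{n_2})$ equals $f_i$ verbatim. Therefore, if the composed system is keyed orthogonal for every keyed orthogonal $g_1, \dots, g_m$, then it is in particular keyed orthogonal for this choice, which forces $f_1, \dots, f_m$ itself to be keyed orthogonal.

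I do not expect a genuine obstacle in this proof; the one step that deserves explicit care is the claim that a size-$m$ keyed orthogonal system in $m$ variables induces a bijection of $\Fq^m$ for each fixed key, which is precisely the content of the earlier remark that an orthogonal system with $m = n$ induces a permutation. Everything else is set-theoretic fiber counting together with the compatibility of polynomial composition with composition of the induced functions.
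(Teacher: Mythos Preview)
Your proposal is correct and follows essentially the same approach as the paper: for the backward direction you both specialize to $g_i = x_i$, and for the forward direction you both use that a size-$m$ keyed orthogonal system in $m$ variables induces a bijection for each fixed key, so that the fiber count of the composition reduces to that of $\phi_{\mathbf{y}}$. Your phrasing in terms of induced functions and fibers is slightly more abstract than the paper's explicit equation-solving, but the underlying argument is the same.
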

    \begin{proof}
        ``$\Leftarrow$'': If we choose $g_i = x_i$, then by assumption the equations
        \begin{equation*}
            g_i (f_1, \dots, f_m, y_1, \dots, y_{n_2}) = f_i (x_1, \dots, x_{n_1}, y_1, \dots, y_{n_2}) = \beta_i,
        \end{equation*}
        where $1 \leq i \leq m$, have $q^{n_1 - m}$ many solutions for every fixed $(y_1, \dots, y_{n_2}) \in \Fq^{n_2}$.
        I.e., $f_1, \dots, f_m$ is a keyed orthogonal system.
        \newline
        ``$\Rightarrow$'': Suppose we are given a system of equations
        \begin{equation*}
            \begin{split}
                g_1(f_1, \dots, f_m, y_1, \dots, y_{n_2}) &= \beta_1, \\
                \dots \\
                g_m(f_1, \dots, f_m, y_1, \dots, y_{n_2})  &= \beta_m,
            \end{split}
        \end{equation*}
        where $\beta_1, \dots, \beta_m \in \Fq$ and $\{ f_i \}_{1 \leq i \leq n_1}$ and $\{ g_i \}_{1 \leq i \leq m}$ are keyed orthogonal systems.
        Fix $\mathbf{y} = (y_1, \dots, y_{n_2}) \in \Fq^{n_2}$ and substitute $\hat{x}_i = f_i$, then the equations $g_i (\hat{x}_1, \dots, \hat{x}_m, \mathbf{y}) = \beta_i$ have a unique solution for the $\hat{x}_i$'s.
        Since $\mathbf{y}$ is fixed also the equations $\hat{x}_i = f_i$ admit $q^{n_2 - m}$ many solutions.
        Therefore, the composition of keyed orthogonal systems is again keyed orthogonal. \qed
    \end{proof}

    In practice keyed orthogonal systems are usually derived from orthogonal systems by a simple addition of the key variables before or after an evaluation of a function.
    \begin{ex}
        If $F: \Fqn \to \Fqn$ is a permutation, then
        \begin{equation*}
            F (\mathbf{x} + \mathbf{y}) \qquad\text{and} \qquad F (\mathbf{x}) + \mathbf{y}
        \end{equation*}
        are keyed permutations.
    \end{ex}

    \section{Generalized Triangular Dynamical Systems}\label{sec:gtds_framework}
    We propose the generalized triangular dynamical system (GTDS) as the main ingredient when designing a block cipher. The GTDS is also the main ingredient in unifying different design principles proposed in the literature such as SPN and Feistel networks.
    \begin{defn}[Generalized triangular dynamical system]\label{Def: generlized triangular dynamical system}
        Let $\Fq$ be a finite field, and let $n \geq 1$.
        For $1 \leq i \leq n$, let $p_i \in \Fq[x]$ be permutation polynomials, and for $1 \leq i \leq n - 1$, let $g_i, h_i \in \Fq[x_{i + 1}, \dots, x_n]$ be polynomials such that the polynomials $g_i$ do not have zeros over $\Fq$.
        Then we define a generalized triangular dynamical system $\mathcal{F} = \{ f_1, \dots, f_n \}$ as follows
        \begin{equation*}\label{Equ: generalized dynamical system}
            \begin{split}
                f_1(x_1,\dots,x_n)		&= p_1(x_1) \cdot g_1(x_2,\dots,x_n) + h_1(x_2,\dots,x_n),  \\
                f_2(x_1,\dots,x_n)		&= p_2(x_2) \cdot g_2(x_3,\dots,x_n) + h_2(x_3,\dots,x_n),	\\
                & \dots											                                    \\
                f_{n-1}(x_1,\dots,x_n)	&= p_{n-1}(x_{n-1}) \cdot g_{n-1}(x_n) + h_{n-1}(x_n),	    \\
                f_n(x_1,\dots,x_n)		&= p_n(x_n).
            \end{split}
        \end{equation*}
    \end{defn}
    Note that a GTDS $\mathcal{F} = \{ f_1, \dots, f_n \}$ must be considered as ordered tuple of polynomials since in general the order of the $f_i$'s cannot be interchanged.
    \begin{prop}
        A generalized triangular dynamical system is an orthogonal system.
    \end{prop}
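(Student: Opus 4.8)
The plan is to unwind the definition of orthogonality in the boundary case $m = n$. Since $q^{n - m} = q^0 = 1$, showing that $\mathcal{F} = \{ f_1, \dots, f_n \}$ is orthogonal amounts to showing that for every target $(\alpha_1, \dots, \alpha_n) \in \Fq^n$ the system $f_i(x_1, \dots, x_n) = \alpha_i$, for $1 \leq i \leq n$, has a \emph{unique} solution in $\Fqn$. The triangular shape of the GTDS strongly suggests solving this system by backward substitution, starting from the last coordinate and working upwards.

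First I would exploit the bottom equation $f_n = p_n(x_n) = \alpha_n$. Because $p_n$ is a permutation polynomial in one variable, the induced map $p_n : \Fq \to \Fq$ is a bijection, so $x_n$ is uniquely determined by $\alpha_n$; this furnishes the base case of a downward induction.

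The inductive step proceeds from $i = n - 1$ down to $i = 1$: assuming $x_{i+1}, \dots, x_n$ have already been uniquely fixed, I would substitute them into the $i$-th equation
\begin{equation*}
    p_i(x_i) \cdot g_i(x_{i+1}, \dots, x_n) + h_i(x_{i+1}, \dots, x_n) = \alpha_i .
\end{equation*}
The values $g_i(x_{i+1}, \dots, x_n)$ and $h_i(x_{i+1}, \dots, x_n)$ are now \emph{constants} in $\Fq$. The crucial point --- and the only place the hypotheses are genuinely used --- is that $g_i$ has no zeros over $\Fq$, so $g_i(x_{i+1}, \dots, x_n) \neq 0$ and I may divide through to obtain
\begin{equation*}
    p_i(x_i) = \frac{\alpha_i - h_i(x_{i+1}, \dots, x_n)}{g_i(x_{i+1}, \dots, x_n)} .
\end{equation*}
The right-hand side is a fixed element of $\Fq$, and since $p_i$ is a permutation polynomial this equation has a unique solution $x_i$. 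By backward induction every coordinate is uniquely determined, hence the system has exactly one solution and $\mathcal{F}$ is orthogonal.

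I do not anticipate any genuine obstacle here: the argument is a clean triangular back-substitution, structurally analogous to solving an upper-triangular linear system. The one subtlety worth flagging in the write-up is \emph{why} the no-zero condition on each $g_i$ is indispensable --- without it the coefficient of $p_i(x_i)$ could vanish for the particular value of $(x_{i+1}, \dots, x_n)$ reached at that stage, which would break both existence and uniqueness at that step. It is precisely this condition, together with the single-variable permutation property of each $p_i$, that keeps every stage of the backward substitution well-posed.
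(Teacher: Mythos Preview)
Your proposal is correct and follows essentially the same argument as the paper: both prove orthogonality by backward substitution through the triangular structure, using that each $p_i$ is a bijection on $\Fq$ and that $g_i$ never vanishes so the division is legitimate. Your write-up is slightly more explicit about why $m=n$ reduces orthogonality to unique solvability and about the role of the no-zero hypothesis, but the underlying proof is the same.
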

    \begin{proof}
        Suppose for $1 \leq i \leq n$ we are given equations
        \begin{equation*}
            f_i(x_i, \dots, x_n) = \alpha_i,
        \end{equation*}
        where $\alpha_i \in \Fq$. To solve the system we work upwards.
        The last polynomial $f_n$ is a univariate permutation polynomial, so we can find a unique solution $\beta_n$ for $x_n$. We plug this solution into the next equation, i.e.,
        \begin{equation*}
            f_{n - 1} (x_{n - 1}, \beta_n) = p_{n - 1} (x_{n - 1}) \cdot g_{n - 1} (\beta_n) + h_{n - 1} (\beta_n).
        \end{equation*}
        To solve for $x_{n - 1}$ we subtract $h_{n - 1} (\beta_n)$, divide by $g_{n - 1} (\beta_n)$, this division is possible since $g_i (x_{i + 1}, \dots, x_n) \neq 0$ for all $(x_{i + 1}, \dots, x_n) \in \Fq^{n - i}$, and invert $p_{n - 1}$. Iterating this procedure we can find a unique solution for all $x_i$. \qed
    \end{proof}
    \begin{cor}
        The inverse orthogonal system $\mathcal{F}^{-1} = \{ \tilde{f}_1, \dots, \tilde{f}_n \}$ to the generalized triangular dynamical system $\mathcal{F} = \{ f_1, \dots, f_n \}$ is given by
        \begin{equation*}
            \begin{split}
                \tilde{f}_1 (x_1, \dots, x_n) &= p_1^{-1} \left( \Big( x_1 - h_1 \big( \tilde{f}_2, \dots, \tilde{f}_n \big) \Big) \cdot \Big( g_1 \big( \tilde{f}_2, \dots, \tilde{f}_n \big) \Big)^{q - 2} \right) \\
                \tilde{f}_2 (x_1, \dots, x_n) &= p_2^{-1} \left( \Big( x_2 - h_2 \big( \tilde{f}_3, \dots, \tilde{f}_n \big) \Big) \cdot \Big( g_2 \big( \tilde{f}_3, \dots, \tilde{f}_n \big) \Big)^{q - 2} \right) \\
                & \dots \\
                \tilde{f}_{n - 1} (x_1, \dots, x_n) &= p_{n - 1}^{-1} \left( \Big( x_{n - 1} - h_{n - 1} \big( \tilde{f}_n \big) \Big) \cdot \Big( g_{n - 1} \big( \tilde{f}_n \big) \Big)^{q - 2} \right) \\
                \tilde{f}_n (x_1, \dots, x_n) &= p_n^{-1} (x_n).
            \end{split}
        \end{equation*}
    \end{cor}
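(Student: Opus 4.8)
The plan is to verify directly that the proposed system $\mathcal{F}^{-1} = \{\tilde{f}_1, \dots, \tilde{f}_n\}$ inverts $\mathcal{F}$ by unwinding the back-substitution already used in the proof of the preceding proposition, the only new ingredient being a polynomial expression for division. First I would record the elementary fact that every $a \in \Fqx$ satisfies $a^{q-1} = 1$, hence $a^{q-2} = a^{-1}$. Since by hypothesis each $g_i$ has no zeros over $\Fq$, the evaluation $g_i(\dots)$ is always a nonzero field element, so $\big( g_i(\dots) \big)^{q-2}$ is a genuine polynomial whose value coincides with the multiplicative inverse of $g_i(\dots)$ at every point of $\Fqn$. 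This is what makes each $\tilde{f}_i$ a bona fide polynomial rather than a rational function, and it explains why the $(q-2)$-th power appears in place of a formal inverse.

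Next I would observe that the definition of $\mathcal{F}^{-1}$ is a well-founded downward recursion: $\tilde{f}_n$ depends only on $x_n$, and each $\tilde{f}_i$ is expressed through $\tilde{f}_{i+1}, \dots, \tilde{f}_n$ alone. I would then fix an arbitrary plaintext $\mathbf{z} = (z_1, \dots, z_n) \in \Fqn$, set $w_i = f_i(\mathbf{z})$, and prove by downward induction on $i$ that $\tilde{f}_i(w_1, \dots, w_n) = z_i$. The base case $i = n$ is immediate from $\tilde{f}_n(\mathbf{w}) = p_n^{-1}(w_n) = p_n^{-1}(p_n(z_n)) = z_n$. For the inductive step, the hypothesis $\tilde{f}_j(\mathbf{w}) = z_j$ for $j > i$ lets me replace the arguments of $g_i$ and $h_i$ by $z_{i+1}, \dots, z_n$; using $w_i - h_i(z_{i+1}, \dots, z_n) = p_i(z_i) \cdot g_i(z_{i+1}, \dots, z_n)$ and cancelling $g_i$ via the $(q-2)$-th power identity collapses the expression to $p_i^{-1}(p_i(z_i)) = z_i$.

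Finally I would upgrade this pointwise left-inverse identity to the full claim. The preceding proposition shows that $\mathcal{F}$ is an orthogonal system of size $n$, hence a bijection of $\Fqn$; therefore every $\mathbf{w} \in \Fqn$ arises as $\mathcal{F}(\mathbf{z})$ for a unique $\mathbf{z}$, and the induction shows that $\mathcal{F}^{-1}$ agrees with the set-theoretic inverse of $\mathcal{F}$ on all of $\Fqn$. Consequently $\mathcal{F}^{-1}$ is the inverse orthogonal system. I expect no serious obstacle here: the argument is a structured back-substitution, and the only point demanding care is the justification that $\big( g_i(\dots) \big)^{q-2}$ is simultaneously a valid polynomial and the correct inverse at every point — which is exactly where the no-zeros hypothesis on the $g_i$ is essential.
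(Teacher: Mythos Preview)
Your proposal is correct and follows the same approach as the paper: direct verification that $\mathcal{F}^{-1}\circ\mathcal{F}=\id$ using the identity $a^{q-2}=a^{-1}$ for $a\in\Fqx$, working in $\Fq[x_1,\dots,x_n]/(x_1^q-x_1,\dots,x_n^q-x_n)$. The paper's own proof is a single sentence declaring this computation ``easy to see''; you simply spell out the back-substitution and the role of the no-zeros hypothesis on the $g_i$ that the paper leaves implicit.
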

    \begin{proof}
        If we consider $\mathcal{F}$ and $\mathcal{F}^{-1}$ in $\Fq[x_1, \dots, x_n] / \big( x_1^q - x_1, \dots, \allowbreak x_n^q - x_n \big)$, then it is easy to see that $\mathcal{F}^{-1} \circ \mathcal{F} = \mathcal{F} \circ \mathcal{F}^{-1} = \id$. \qed
    \end{proof}

    Note that the Triangular Dynamical System introduced by Ostafe and Shparlinski \cite{Ostafe-DegreeGrowth} is a special case of our GTDS.
    In particular, if we choose $p_i(x_i) = x_i$ for all $i$ and impose the condition that each polynomial $g_i$ has a unique leading monomial, i.e.,
    \begin{equation}
        g_i (x_{i + 1}, \dots, x_n) = x_{i + 1}^{s_{i, i + 1}} \cdots x_n^{s_{i, n}} + \tilde{g}_i (x_{i + 1}, \dots, x_n),
    \end{equation}
    where
    \begin{align}
        \degree{\tilde{g}} &< s_{i, i + 1} + \ldots + s_{i, n}, \text{ and}\\
        \degree{h_i} &\leq \degree{g_i}
    \end{align}
    for $i = 1, \dots, n - 1$, then we obtain the original triangular dynamical systems.
    Notice that under iteration these systems exhibit a property highly uncommon for general polynomial dynamical systems: polynomial degree growth (see \cite[\S2.2]{Ostafe-DegreeGrowth}).

    \subsection{GTDS and (Keyed) Permutations}
    In practice, every keyed permutation or block cipher (in cryptography) is constructed using an iterative structure where round functions are iterated a fixed number of times. Using the GTDS we first define such a round function.
    In this section $n \in \mathbb{N}$ denotes the number of field elements constituting a block and $r \in \mathbb{N}$ denotes the number of rounds of
    an iterative permutation.
    \begin{defn}[Round function]
        Let $\Fq$ be a finite field, let $n \geq 1$ be an integer, let $\mathbf{A} \in \Fqnxn$ be an invertible matrix, and let $\mathbf{b}\in \Fqn$ be a vector.
        Then, the affine mixing layer is described by the map
        \begin{equation*}
            \mathcal{L}: \Fqn \to \Fqn, \qquad \mathbf{x} \mapsto \mathbf{A} \cdot \mathbf{x} + \mathbf{b},
        \end{equation*}
        and the key addition is described by the map
        \begin{equation*}
            \mathcal{K}: \Fqn \times \Fqn \to \Fqn, \qquad \left( \mathbf{x}, \mathbf{k} \right) \mapsto \mathbf{x} + \mathbf{k}.
        \end{equation*}
        We abbreviate $\mathcal{K}_\mathbf{k} = \mathcal{K} (\_, \mathbf{k})$.
        Let $\mathcal{F} \subset \Fq[x_1, \dots, x_n]$ be a GTDS or a composition of two or more GTDS and affine permutations.
        Then the round function of a block cipher is defined as the following composition
        \begin{equation*}
            \mathcal{R}: \Fqn \times \Fqn \to \Fqn, \qquad \left( \mathbf{x}, \mathbf{k} \right) \mapsto \mathcal{K}_\mathbf{k} \circ \mathcal{L} \circ \mathcal{F} \left( \mathbf{x} \right).
        \end{equation*}
        We also abbreviate $\mathcal{R}_\mathbf{k} = \mathcal{R} (\_, \mathbf{k})$.
    \end{defn}
    It is obvious that $\mathcal{R}$ is a keyed permutation, hence it also is a keyed orthogonal system of polynomials in the sense of \Cref{Def: keyed permutations}.
    Now we can introduce our generalized notion of block ciphers which encompasses almost all existing block ciphers.
    \begin{defn}[An algebraic description of keyed permutations]\label{Def: block cipher}
        Let $\Fq$ be a finite field, let $n, r \geq 1$ be integers, and let $\mathbf{K} \in \Fq^{n \times (r + 1)}$ be a matrix.
        We index the columns of $\mathbf{K}$ by $0, \dots, r$, the $i$\textsuperscript{th} column $\mathbf{k}_i$ denotes the $i$\textsuperscript{th} round key.
        Let $\mathcal{K}: \Fqn \times \Fqn \to \Fqn$ be the key addition function, and let $\mathcal{R}^{(1)}, \dots, \mathcal{R}^{(r)}: \Fqn \times \Fqn \to \Fqn$ be the round functions.
        Then a block cipher is defined as the following composition
        \begin{equation*}
            \mathcal{C}_{r}: \Fqn \times \Fq^{n \times (r + 1)} \to \Fqn,  \qquad \left( \mathbf{x}, \mathbf{K} \right) \mapsto \mathcal{R}_{\mathbf{k}_r}^{(r)} \circ \cdots \circ \mathcal{R}_{\mathbf{k}_1}^{(1)} \circ \mathcal{K}_{\mathbf{k}_0} \left( \mathbf{x} \right).
        \end{equation*}
        We abbreviate $\mathcal{C}_{r, \mathbf{K}} = \mathcal{C}_{r} (\_, \mathbf{K})$, and if the round functions are clear from context or identical, then we also abbreviate $\mathcal{R}_\mathbf{k}^r = \mathcal{R}_{\mathbf{k}_r}^{(r)} \circ \cdots \circ \mathcal{R}_{\mathbf{k}_1}^{(1)}$.
    \end{defn}
    For the remaining parts of the paper a keyed permutation or a block cipher should be understood as a function described as in \Cref{Def: block cipher}, unless specified otherwise.
    We stress that a generic definition of an iterative block cipher may only use the notion of round key(s) (as defined with $\mathbf{K}$ in \Cref{Def: block cipher}) and does not require explicit definition of a key scheduling function.
    The specific definition of a key scheduling function can depend on the input key size and specific instantiations of the iterative block cipher.
    Also, for most of the cryptographic literature the generic definition, (security) analysis and security proofs of iterative block ciphers (e.g. SPN, Even-Mansour etc.) only use the notion of round keys \cite{AC:LamPatSeu12,AC:DDKS14,EC:CogSeu15}, not an explicit scheduling function.

    \section{Instantiating Block Ciphers}\label{Sec: examples of block ciphers}
    In this section we will show that the GTDS-based algebraic definition of iterative permutations is able to describe different design strategies.

    We note with respect to GTDS that well-known design strategies such as SPN, partial SPN, Feistel, generalized Feistel and Lai--Massey are constructed with trivial polynomials $g_i$ in the GTDS, namely $g_i = 1$.

    \subsection{Feistel Networks}\label{Sec: Feistel network}
    For simplicity, we only show how the GTDS based algebraic definition can describe the unbalanced Feistel with expanding round function.
    The classical two branch Feistel is then a special case of the unbalanced expanding one.
    Moreover, it is straight-forward to show that GTDS-based algebraic definition can describe other types of Feistel networks such as unbalanced Feistel with expanding round functions, Nyberg's GFN, etc.

    \subsubsection{Unbalanced Feistel.}
    Let $n > 1$, and let $f \in \Fq[x]$ be any function represented by a polynomial.
    The unbalanced Feistel network with expanding round function is defined as
    \begin{equation}
        \begin{pmatrix}
            x_1 \\
            \vdots \\
            x_n
        \end{pmatrix}
        \mapsto
        \begin{pmatrix}
            x_n \\
            x_1 + f \left( x_n \right) \\
            \vdots \\
            x_{n - 1} + f \left( x_n \right)
        \end{pmatrix}
        .
    \end{equation}
    The GTDS
    \begin{equation}
        \begin{split}
            f_i (x_1, \dots, x_n) &= x_i + f(x_n),\quad 1 \leq i \leq n - 1, \\
            f_n (x_1, \dots, x_n) &= x_n,
        \end{split}
    \end{equation}
    together with the shift permutation
    \begin{equation}
        (x_1, \dots, x_{n - 1}, x_n) \mapsto (x_n, x_1, \dots, x_{n - 1})
    \end{equation}
    describe the unbalanced Feistel network with expanding round function.

    \subsection{Substitution-Permutation Networks}\label{Sec: SPN}
    In \cite[\S7.2.1]{Katz-Cryptography} a handy description of Substitution-Permutation networks (SPN) was given. Let $S \in \Fq[x]$ be a permutation polynomial, the so called S-box. Then the round function of a SPN consists of three parts:
    \begin{enumerate}[label=(\arabic*)]
        \item Addition of the round keys.

        \item Application of the S-box, i.e.,
        \begin{equation*}
            (x_1, \dots, x_n) \mapsto \big( S(x_1), \dots, S(x_n) \big).
        \end{equation*}

        \item Permutation and mixing of the blocks.
    \end{enumerate}
    The mixing in the last step is usually done via linear/affine transformations.
    In this case the GTDS of a SPN reduces to
    \begin{equation}\label{Equ: S-box GTDS}
        f_i (x_1, \dots, x_n) = S(x_i),
    \end{equation}
    where $1 \leq i \leq n$.
    If the last step is not linear then one either must introduce additional GTDS as round functions or modify the GTDS in \Cref{Equ: S-box GTDS}.

    \subsubsection{AES-128.}
    At the time of writing the most famous SPN is the AES family \cite{AES,Daemen-AES}.
    If we use the description of AES-128 given in \cite{FSE:BucPysWei06}, then it is easy to see that AES-128 is also covered by our definition of block ciphers.
    AES-128 is defined over the field $\F = \F_{2^8}$ and has $16$ blocks, i.e., it is a keyed permutation over $\F^{16}$. The AES-128 S-box is given by
    \begin{alignat}{2}
        S: \F &\to \F, \nonumber \\
        x &\mapsto \texttt{05} x^{254} &&+ \texttt{09} x^{253} + \texttt{F9} x^{251} + \texttt{25} x^{247} + \texttt{F4} x^{239} \\ &\phantom{x} &&+ x^{223} + \texttt{B5} x^{191} + \texttt{8F} x^{127} + \texttt{63}, \nonumber
    \end{alignat}
    and the GTDS of AES-128 is given by \Cref{Equ: S-box GTDS}.

    Let us now describe the permuting and mixing of the blocks via linear transformations.
    The \texttt{ShiftRows} operations can be described with the block matrix
    \begin{equation}
        D_\text{SR} =
        \begin{pmatrix}
            D_{\text{SR}_0} &   0   &   0   &   0   \\
            0   &   D_{\text{SR}_1} &   0   &   0   \\
            0   &   0   &   D_{\text{SR}_2} &   0   \\
            0   &   0   &   0   &   D_{\text{SR}_3}
        \end{pmatrix}
        \in \F^{16 \times 16},
    \end{equation}
    where
    \begin{equation}
        D_{\text{SR}_t} = \left( \Delta_{i, (j - t) \mod 4} \right) \in \F^{4 \times 4},
    \end{equation}
    and $\Delta_{i, j}$ is the Kronecker delta.
    The \texttt{MixColumns} operation can be described as the following tensor product
    \begin{equation}
        D_\text{MC} =
        \begin{pmatrix}
            \texttt{02}   &   \texttt{03}   &   \texttt{01}   &   \texttt{01}   \\
            \texttt{01}   &   \texttt{02}   &   \texttt{03}   &   \texttt{01}   \\
            \texttt{01}   &   \texttt{01}   &   \texttt{02}   &   \texttt{03}   \\
            \texttt{03}   &   \texttt{01}   &   \texttt{01}   &   \texttt{02}   \\
        \end{pmatrix}
        \otimes I_4 \in \F^{16 \times 16},
    \end{equation}
    where the entries in the left matrix are hexadecimal representations of field elements.
    The linear mixing layer $\mathcal{L}$ of AES-128 can now be represented by the following matrix $D$
    \begin{equation}
        D = P \cdot D_\text{MC} \cdot D_\text{SR} \cdot P,
    \end{equation}
    where $P \in \F^{16 \times 16}$ denotes the transposition matrix.
    In the last round the \texttt{MixColumns} operation is dropped, hence $\tilde{\mathcal{L}}$ is represented by $\tilde{D}$
    \begin{equation}
        \tilde{D} = P \cdot D_\text{SR} \cdot P.
    \end{equation}
    Similarly, we can also describe the key schedule of AES-128.

    \subsubsection{Partial SPN.}
    In a partial SPN the S-box is only applied to some input variables and not all of them.
    This construction was proposed for ciphers like \LowMC \cite{EC:ARSTZ15}, the Hades design strategy \cite{EC:GLRRS20} and the \Poseidon family \cite{USENIX:GKRRS21} that are efficient in the MPC setting.
    Clearly, any partial SPN is also covered by the GTDS.

    \subsection{Lai--Massey Ciphers and GTDS}\label{Sec: Lai--Massey}
    Another well-known design strategy for block ciphers is the Lai--Massey design which was first introduced in \cite{Lai-Massey}.
    For two branches let $g \in \Fq[x]$ be a polynomial, then the round function of the Lai--Massey cipher is defined as
    \begin{equation}
        \mathcal{F}_\text{LM}:
        \begin{pmatrix}
            x \\ y
        \end{pmatrix}
        \mapsto
        \begin{pmatrix}
            x + g (x - y) \\ y +  g(x - y)
        \end{pmatrix}.
    \end{equation}
    Since the difference between the branches is invariant under application of $\mathcal{F}_\text{LM}$ it is possible to invert the construction.
    At the first look it may appear that the Lai--Massey can not be described with GTDS. However, a careful analysis shows one round of Lai--Massey is in fact a composition of a Feistel Network and two linear permutations.
    We consider the following triangular dynamical systems
    \begin{equation}
        \mathcal{F}_1 (x, y) =
        \begin{pmatrix}
            x - y \\ y
        \end{pmatrix}
        ,\quad
        \mathcal{F}_2 (x, y) =
        \begin{pmatrix}
            x \\ y + g(x)
        \end{pmatrix}
        ,\quad
        \mathcal{F}_3 (x, y) =
        \begin{pmatrix}
            x + y \\ y
        \end{pmatrix}.
    \end{equation}
    Then, it is easily checked that $\mathcal{F}_\text{LM} = \mathcal{F}_3 \circ \mathcal{F}_2 \circ \mathcal{F}_1$.

    \subsubsection{Generalized Lai--Massey.}
    Recently, a generalization of the Lai--Massey was proposed in \cite[\S3.3]{ToSC:GOPS22} by Grassi et al.
    It is based on the following observation: If one is given field elements $\omega_1, \dots, \omega_n \in \Fq$ such that $\sum_{i = 1}^{n} \omega_i = 0$, then the mapping
    \begin{equation}
        \begin{pmatrix}
            x_1 \\ \vdots \\ x_n
        \end{pmatrix}
        \mapsto
        \begin{pmatrix}
            x_1 + g (\sum_{i = 1}^{n} \omega_i x_i) \\ \vdots \\ x_n + g (\sum_{i = 1}^{n} \omega_i x_i)
        \end{pmatrix}
    \end{equation}
    is invertible for any polynomial $g \in \Fq [x]$.

    We will use this observation to propose an even more general version of the Lai--Massey from the GTDS and linear permutations.
    \begin{defn}[Generalized Lai--Massey]\label{Def: generlized dynamical system of Lai Massey cipher}
        Let $\Fq$ be a finite field, and let $n \geq 2$ be an integer.
        Let $\omega_1, \dots, \omega_n \in\Fq$ be such that $\sum_{i = 1}^n \omega_i = 0$, and denote with $m$ the largest index $1 \leq i \leq n$ such that $\omega_i$ is non-zero.
        For $1 \leq i \leq n$ let $p_i \in \Fq[x]$ be permutation polynomials, and let $g\in \Fq[x, x_{m + 1}, \dots, x_n]$ be a polynomial.
        Then we define the generalized Lai--Massey $\mathcal{F}_\text{LM} = \{ f_1, \dots, f_n \}$ as follows
        \begin{equation*}
            \begin{split}
                f_1 (x_1, \dots, x_n) &= p_1 (x_1) + g \left( \sum_{i = 1}^{m} \omega_i \cdot p_i (x_i), x_{m + 1}, \dots, x_n \right), \\
                &\dots \\
                f_m (x_1, \dots, x_n) &= p_m (x_m) + g \left( \sum_{i = 1}^{m} \omega_i \cdot p_i (x_i), x_{m + 1}, \dots, x_n \right), \\
                f_{m + 1} (x_1, \dots, x_n) &= p_{m + 1} (x_{m + 1}), \\
                &\dots \\
                f_n (x_1, \dots, x_n) &= p_n (x_n).
            \end{split}
        \end{equation*}
    \end{defn}
    \begin{rem}
        If $n \equiv 0 \mod 2$, then it is evident from the first equation in the proof of \cite[Proposition~5]{EPRINT:GOPS21} that Grassi et al.'s generalized Lai--Massey permutation is also covered by \Cref{Def: generlized dynamical system of Lai Massey cipher} and a linear transformation.
    \end{rem}
    For completeness, we establish that the generalized Lai--Massey is indeed invertible.
    \begin{lem}
        Let $\Fq$ be a finite field.
        The generalized Lai--Massey is an orthogonal system.
    \end{lem}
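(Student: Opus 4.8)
The plan is to show directly that the defining equations $f_i = \alpha_i$ admit a unique solution for every target $(\alpha_1, \dots, \alpha_n) \in \Fqn$; since this is exactly the condition that the system be orthogonal of full size $n$ (yielding $q^{n - n} = 1$ solution), this suffices.

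First I would solve the bottom block. The equations $f_j (x_1, \dots, x_n) = p_j (x_j) = \alpha_j$ for $m + 1 \leq j \leq n$ involve only the univariate permutation polynomials $p_j$, so each tail variable is uniquely recovered as $x_j = p_j^{-1} (\alpha_j)$. This fixes $(x_{m + 1}, \dots, x_n)$ and hence all occurrences of these variables inside $g$.

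The key step exploits the constraint $\sum_{i = 1}^n \omega_i = 0$. Since $m$ is the largest index with $\omega_i \neq 0$, we have $\omega_{m + 1} = \dots = \omega_n = 0$, and therefore $\sum_{i = 1}^m \omega_i = 0$ as well. Writing $c := g \left( \sum_{i = 1}^m \omega_i \cdot p_i (x_i), x_{m + 1}, \dots, x_n \right)$ for the common correction term appearing in $f_1, \dots, f_m$, I would form the weighted combination $\sum_{i = 1}^m \omega_i \alpha_i = \sum_{i = 1}^m \omega_i f_i = \sum_{i = 1}^m \omega_i \cdot p_i (x_i) + c \cdot \sum_{i = 1}^m \omega_i = \sum_{i = 1}^m \omega_i \cdot p_i (x_i)$, where the last summand drops out precisely because $\sum_{i = 1}^m \omega_i = 0$. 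Thus the inner argument $S := \sum_{i = 1}^m \omega_i \cdot p_i (x_i)$ of $g$ is recovered directly from the prescribed outputs as $S = \sum_{i = 1}^m \omega_i \alpha_i$, without first knowing the individual $x_i$.

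With $S$ and the tail variables in hand, the correction term $c = g (S, x_{m + 1}, \dots, x_n)$ is completely determined, and the top block reduces to $p_i (x_i) = \alpha_i - c$ for $1 \leq i \leq m$; inverting each permutation polynomial $p_i$ yields a unique $x_i = p_i^{-1} (\alpha_i - c)$. Hence the full solution is unique, so the system is orthogonal. I expect the only subtle point to be the cancellation identity in the key step, namely recognizing that the nonlinear coupling term $g$, which obstructs a naive triangular solve, is annihilated by the weighted sum because the weights sum to zero. This is exactly the mechanism that makes the generalized Lai--Massey invertible, generalizing the two-branch case where $\omega_1 + \omega_2 = 0$.
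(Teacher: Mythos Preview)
Your proof is correct and follows essentially the same approach as the paper: solve the tail variables $x_{m+1},\dots,x_n$ by inverting the $p_j$, take the $\omega$-weighted sum of the first $m$ equations so that the common $g$-term cancels (since $\sum_{i=1}^m \omega_i = 0$) and the inner argument of $g$ is recovered as $\sum_{i=1}^m \omega_i \alpha_i$, then substitute back and invert each $p_i$. Your write-up is in fact more explicit than the paper's about why the cancellation works.
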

    \begin{proof}
        Suppose we are given equations $f_i (x_1, \dots, x_n) = \alpha_i$, where $\alpha_i \in \Fq$.
        For $i = m + 1, \dots, n$ we simply invert $p_i$ to solve for $x_i$.
        For $i = 1, \dots, m$ we compute $\sum_{i = 1}^{m} \omega_i f_i = \sum_{i = 1}^{m} \omega_i p_i (x_i) = \sum_{i = 1}^{m} \omega_i \alpha_i$ = $\alpha$.
        Now we plug $\alpha$ and the solutions for $x_{m + 1}, \dots, x_n$ into the polynomial $g$ in the first $m$ equations, rearrange them, and invert the univariate permutation polynomials to obtain a unique solution. \qed
    \end{proof}

    Before we prove the reduction of the generalized Lai--Massey to the GTDS we explain the rationale behind \Cref{Def: generlized dynamical system of Lai Massey cipher}.
    Usually, in the Lai--Massey the polynomial $g$ is added to all the branches, but our definition allows the concatenation of two independent Lai--Massey permutations
    \begin{equation}
        \begin{pmatrix}
            x_1 \\ x_2 \\ x_3 \\ x_4
        \end{pmatrix}
        \mapsto
        \begin{pmatrix}
            x_1 + g_1 (x_1 - x_2) \\
            x_2 + g_1 (x_1 - x_2) \\
            x_3 + g_2 (x_3 - x_4) \\
            x_4 + g_2 (x_3 - x_4)
        \end{pmatrix}
        ,
    \end{equation}
    or the construction of intertwined Lai--Massey permutations
    \begin{equation}
        \begin{pmatrix}
            x_1 \\ x_2 \\ x_3 \\ x_4
        \end{pmatrix}
        \mapsto
        \begin{pmatrix}
            x_1 + g_1 (x_1 - x_2, x_3 - x_4) \\
            x_2 + g_1 (x_1 - x_2, x_3 - x_4) \\
            x_3 + g_2 (x_3 - x_4) \\
            x_4 + g_2 (x_3 - x_4)
        \end{pmatrix}
    \end{equation}

    Analog to the classical two branch Lai--Massey we can describe the generalized Lai--Massey as composition of several GTDS and linear permutations.
    \begin{thm}\label{Th: Lai--Massey reduced to GTDS}
        Let $\Fq$ be a finite field.
        The generalized  Lai--Massey can be constructed via compositions of generalized triangular dynamical systems and affine permutations.
    \end{thm}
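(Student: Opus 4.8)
The plan is to mimic and generalize the two-branch factorization $\mathcal{F}_\text{LM}=\mathcal{F}_3\circ\mathcal{F}_2\circ\mathcal{F}_1$ recalled above. The three conceptual ingredients are: a diagonal GTDS that applies the univariate permutation polynomials $p_i$; an invertible linear shear $T$ that records the linear form $\sum_{i=1}^m\omega_i p_i(x_i)$ in a single coordinate; and a GTDS in the sheared coordinates that merely adds the value of $g$ to the remaining branches. The whole generalized Lai--Massey will then be exhibited as $\mathcal{S}_2\circ T^{-1}\circ\mathcal{G}\circ T\circ\mathcal{S}_1$.

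First I would peel off the S-boxes. Let $\mathcal{S}_1$ be the GTDS applying $p_1,\dots,p_m$ to the first $m$ coordinates and the identity to the rest, and let $\mathcal{S}_2$ be the GTDS applying $p_{m+1},\dots,p_n$ to the last $n-m$ coordinates and the identity to the rest; both are genuine GTDS with all multipliers $g_i=1$ and all $h_i=0$. Writing $z=\mathcal{S}_1(\mathbf{x})$, so that $z_i=p_i(x_i)$ for $i\le m$ and $z_i=x_i$ otherwise, it then suffices to realize the core map
\begin{equation*}
\mathcal{M}:\quad z_i\mapsto z_i+g\Big(\textstyle\sum_{j=1}^m\omega_j z_j,\,z_{m+1},\dots,z_n\Big)\ (1\le i\le m),\qquad z_i\mapsto z_i\ (i>m),
\end{equation*}
since a direct substitution, using $z_j=p_j(x_j)$ for $j\le m$ and $z_j=x_j$ for $j>m$, gives $\mathcal{S}_2\circ\mathcal{M}\circ\mathcal{S}_1=\mathcal{F}_\text{LM}$.

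Next I would conjugate $\mathcal{M}$ by the linear (hence affine) shear $T$ defined by $w_i=z_i$ for $i\ne m$ and $w_m=\sum_{j=1}^m\omega_j z_j$; since $m$ is the largest index with $\omega_m\ne0$, the coefficient $\omega_m$ is invertible and $T$ is a bona fide permutation. In the $w$-coordinates I claim the conjugate $\mathcal{G}=T\circ\mathcal{M}\circ T^{-1}$ is the GTDS sending $w_i\mapsto w_i+g(w_m,w_{m+1},\dots,w_n)$ for $i<m$ and fixing $w_i$ for $i\ge m$. This is a legitimate GTDS: for $i<m$ the added term $g(w_m,\dots,w_n)$ is a polynomial in the strictly higher-indexed variables $w_{i+1},\dots,w_n$ (as $m\ge i+1$), so it plays the role of $h_i$ with $p_i=\id$ and $g_i=1$; for $i\ge m$ the coordinate is untouched; and all multipliers equal $1$, hence are zero-free.

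The step I expect to be the crux is verifying that $T^{-1}\circ\mathcal{G}\circ T$ indeed reproduces $\mathcal{M}$, and in particular that branch $m$ receives the correction $+g(\dots)$ even though $\mathcal{G}$ leaves $w_m$ fixed. Chasing the definitions, branches $i<m$ and $i>m$ are immediate; for branch $m$ one finds $z'_m=\omega_m^{-1}\big(w_m-\sum_{j<m}\omega_j(z_j+G)\big)$ with $G=g(\sum_{j\le m}\omega_j z_j,\,z_{m+1},\dots,z_n)$, and using $\sum_{j=1}^m\omega_j=\sum_{j=1}^n\omega_j=0$ (hence $\sum_{j<m}\omega_j=-\omega_m$) together with $\sum_{j\le m}\omega_j z_j-\sum_{j<m}\omega_j z_j=\omega_m z_m$, this collapses exactly to $z_m+G$. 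This cancellation is the sole place where the hypothesis $\sum_i\omega_i=0$ enters, and it is precisely what makes the shear-conjugate of a one-sided GTDS act symmetrically across all active branches. Assembling the pieces yields $\mathcal{F}_\text{LM}=\mathcal{S}_2\circ T^{-1}\circ\mathcal{G}\circ T\circ\mathcal{S}_1$, a composition of generalized triangular dynamical systems and affine permutations, as claimed.
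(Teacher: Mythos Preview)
Your proof is correct and follows essentially the same five-step factorization as the paper: apply the $p_i$ on the first $m$ branches, pass to sheared coordinates in which the $m$\textsuperscript{th} entry carries $\sum_{j\le m}\omega_j p_j(x_j)$, add $g$ to the first $m-1$ branches via a GTDS, undo the shear, and finally apply the $p_i$ on the last $n-m$ branches. Your shear $T$ is in fact a little cleaner than the paper's $\mathcal{F}_2$, since it modifies only the $m$\textsuperscript{th} coordinate and thereby avoids the paper's auxiliary scalings by $\omega_i$ and the attendant case distinction $\omega_i\neq 0$ versus $\omega_i=0$.
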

    \begin{proof}
        The first dynamical system is the application of the univariate permutation polynomials to the first $m$ branches
        \begin{equation*}
            \mathcal{F}_1 :
            (x_1, \dots, x_n)^\intercal
            \mapsto
            \begin{pmatrix}
                \{ p_i (x_i) \}_{1 \leq i \leq m} \\ \{ x_i \}_{m + 1 \leq i \leq n}
            \end{pmatrix}
            .
        \end{equation*}
        In the second one we construct the sum with the $\omega_i$'s
        \begin{equation*}
            \mathcal{F}_2 :
            \begin{pmatrix}
                x_1 \\ \vdots \\ x_n
            \end{pmatrix}
            \mapsto
            \begin{pmatrix}
                \begin{dcases}
                    \begin{rcases}
                        \omega_i \cdot x_i, & \omega_i \neq 0, \\
                        x_i, & \omega_i = 0
                    \end{rcases}
                \end{dcases}
                _{1 \leq i \leq m - 1} \\
                \sum_{i = 1}^{m} \omega_i \cdot x_i \\
                \{ x_i \}_{m + 1 \leq i \leq n}
            \end{pmatrix}.
        \end{equation*}
        In the third one we add the polynomial $g$ to the first $m - 1$ branches, though we have to do a case distinction whether $\omega_i \neq 0$ or not,
        \begin{equation*}
            \mathcal{F}_3:
            \begin{pmatrix}
                x_1 \\ \vdots \\ x_n
            \end{pmatrix}
            \mapsto
            \begin{pmatrix}
                \begin{dcases}
                    \begin{rcases}
                        x_i + \omega_i \cdot g (x_m, x_{m + 1}, \dots, x_n), & \omega_i \neq 0,\\
                        x_i + g (x_m, x_{m + 1}, \dots, x_n), & \omega_i = 0
                    \end{rcases}
                \end{dcases}
                _{1 \leq i \leq m - 1} \\
                \{ x_i \}_{m \leq i \leq n}
            \end{pmatrix}
        \end{equation*}
        Then we add the polynomial $g$ to the $m$\textsuperscript{th} branch and cancel the factors $\omega_i$ whenever necessary
        \begin{equation*}
            \mathcal{F}_4:
            \begin{pmatrix}
                x_1 \\ \vdots \\ x_n
            \end{pmatrix}
            \mapsto
            \begin{pmatrix}
                \begin{dcases}
                    \begin{rcases}
                        \omega_i^{-1} \cdot x_i, & \omega_i \neq 0, \\
                        x_i, & \omega_i = 0
                    \end{rcases}
                \end{dcases}
                _{1 \leq i \leq m - 1} \\
                \omega_m^{-1} \cdot \left( x_m - \sum_{ \substack{1 \leq i \leq m - 1 \\ \omega_i \neq 0 }} x_i \right) \\
                \{ x_i \}_{m + 1 \leq i \leq n}
            \end{pmatrix}
            .
        \end{equation*}
        Lastly, we apply the univariate permutation polynomials to the remaining branches
        \begin{equation*}
            \mathcal{F}_5:
            (x_1, \dots, x_n)^\intercal
            \mapsto
            \begin{pmatrix}
                \{ x_i \}_{1 \leq i \leq m} \\
                \{ p_i (x_i) \}_{m + 1 \leq i \leq n}
            \end{pmatrix}
            .
        \end{equation*}
        Now it follows from a simple calculation that indeed $\mathcal{F}_5 \circ \dots \circ \mathcal{F}_1$ implements the generalized Lai--Massey construction. \qed
    \end{proof}

    \subsection{Constructions with Non-Trivial Polynomials with No Zeros}
    Recall that for $1 \leq i \leq n - 1$ the $i$\textsuperscript{th} branch in a GTDS is given by
    \begin{equation}
        f_i (x_1, \dots, x_n) = p_i (x_i) \cdot g_i (x_{i + 1}, \dots, x_n) + h_i (x_{i + 1}, \dots, x_n),
    \end{equation}
    where $g_i$ is a polynomial that does not have any zeros.
    All constructions we have investigated so far have one thing in common, they all use trivial $g_i$'s, that is $g_i = 1$.
    Therefore, it is now time to cover constructions that have non-trivial $g_i$'s.

    \subsubsection{Horst \& Griffin.}
    The \texttt{Horst} scheme \cite{Griffin} was introduced as generalization of the Feistel scheme.
    It is defined as
    \begin{equation}
        \begin{pmatrix}
            x_1 \\ \vdots \\ x_n
        \end{pmatrix}
        \mapsto
        \begin{pmatrix}
            x_1 \cdot g_1 (x_2, \dots, x_n) + h_1 (x_2, \dots, x_n) \\
            \vdots \\
            x_{n - 1} \cdot g_{n - 1} (x_n) + h_{n - 1} (x_n) \\
            x_n
        \end{pmatrix}
        ,
    \end{equation}
    where $g_i, h_i \in \Fq [x_{i + 1}, \dots, x_n]$.
    If the polynomials $g_i$'s do not have any zeros over $\Fq$, then \texttt{Horst} induces a permutation.
    Clearly, this is a special instance of a GTDS.
    The permutation \Griffin-\textpi \cite{Griffin} is a concatenation of a SPN and a \texttt{Horst} permutation, so it is also covered by the GTDS framework.

    \subsubsection{\ReinforcedConcrete.}
    The \ReinforcedConcrete \cite{ReinforcedConcrete} hash function is the first arithmetization-oriented hash function that utilizes lookup tables.
    At round level the \texttt{Reinforced Concrete} permutation over $\Fp^3$, where $p \gtrsim 2^{64}$ is a prime, consists of three small permutations.
    The first permutation is the mapping \texttt{Bricks}
    \begin{equation}
        \begin{split}
            \texttt{Bricks}: \Fp^3 &\to \Fp^3, \\
            \begin{pmatrix}
                x_1 \\ x_2 \\ x_3
            \end{pmatrix}
            &\mapsto
            \begin{pmatrix}
                x_1^d \\
                x_2 \cdot \left( x_1^2 + \alpha_1 \cdot x_1 + \beta_1 \right) \\
                x_3 \cdot \left( x_2^2 + \alpha_2 \cdot x_2 + \beta_2 \right)
            \end{pmatrix},
        \end{split}
    \end{equation}
    where $d = 5$, note that the prime must be suitable chosen such that $\gcd \left( d, p - 1 \right) \allowbreak = 1$ else the first component does not induce a permutation, and $\alpha_1, \alpha_2, \beta_1, \beta_2 \in \Fp$ such that $\alpha_i^2 - 4 \beta_i$ is not quadratic residue module $p$, then the quadratic polynomials do not have any zeros over $\Fp$.
    The second permutation is called \texttt{Concrete} and is given by matrix multiplication and constant addition.
    The third permutation \texttt{Bars} is an S-box that is implemented via a lookup table.
    Clearly, these mappings are covered by the GTDS framework.

    \subsubsection{\Arion.}
    The \Arion block cipher and \ArionHash hash function \cite{Arion} are the first designs that utilize the full GTDS structure at round level.
    It is defined over prime fields with $p \geq 2^{60}$, and its GTDS is
    \begin{equation}
        \begin{split}
            f_i (x_1, \dots, x_n) &= x_i^{d_1} \cdot g_i (\sigma_{i + 1, n}) + h_i (\sigma_{i + 1, n}), \qquad 1 \leq i \leq n - 1, \\
            f_n (x_1, \dots, x_n) &= x_n^{e},
        \end{split}
    \end{equation}
    where $d_1 \in \mathbb{Z}_{> 1}$ is the smallest integer such that $\gcd \left( d_1, p - 1 \right) = 1$, for one $d_2 \in \{ 121, 123, 125, 129, 161, 257 \}$ $e \in \mathbb{Z}_{ >1}$ is such that $e \cdot d_2 \equiv 1 \mod p - 1$, $g_i, h_i \in \Fp [x]$ are quadratic polynomials such that the $g_i$'s are irreducible, and
    \begin{equation}
        \sigma_{i + 1, n} = \sum_{j = i + 1}^{n} x_i + f_i.
    \end{equation}

    \section{Analysis of GTDS-based Permutations}\label{Sec: analysis of GTDS}
    \subsection{Bounding the Differential Uniformity of the GTDS}\label{Sec: Differential uniformity}
    Differential cryptanalysis \cite{C:BihSha90} and its variants are one of the most widely used attack vectors in modern cryptography.
    It is based on the observation that certain input differences can propagate through the rounds of a block cipher with high probability.
    The key measure to quantify whether a function is weak to differential cryptanalysis is the so-called differential uniformity.
    In this section we prove an upper bound for the differential uniformity of the GTDS under minimal assumptions on the polynomials $p_i$, $g_i$ and $h_i$.
    We recall the definition of differential uniformity.
    \begin{defn}[{see \cite{EC:Nyberg93}}]
        Let $\Fq$ be a finite field, and let $f: \Fqn \to \Fqm$ be a function.
        \begin{enumerate}
            \item The differential distribution table of $f$ at $\mathbf{a} \in \Fqn$ and $\mathbf{b} \in \Fqm$ is defined as
            \begin{equation*}
                \delta_f (\mathbf{a}, \mathbf{b}) = \left\vert \{ \mathbf{x} \in \Fqn \mid f (\mathbf{x} + \mathbf{a}) - f (\mathbf{x}) = \mathbf{b} \} \right\vert.
            \end{equation*}

            \item The differential uniformity of $f$ is defined as
            \begin{equation*}
                \delta (f) = \max_{ \substack{\mathbf{a} \in \Fqn \setminus \{ \mathbf{0} \},\\ \mathbf{b} \in \Fqm} } \delta_f (\mathbf{a}, \mathbf{b}).
            \end{equation*}
        \end{enumerate}
    \end{defn}

    The following lemma is certainly well-known, it will play an essential role in the proof of the main result of this section.
    \begin{lem}\label{Lem: degree and differential uniformity}
        Let $\Fq$ be a finite field, and let $f \in \Fq [x] / (x^q - x)$.
        Then $\delta (f) < q$ if and only if $\deg \big( f(x + a) - f(x) \big) > 0$ for all $a \in \Fqx$.
        In particular, if $\delta (f) < q$ then $\delta (f) < \degree{f}$.
    \end{lem}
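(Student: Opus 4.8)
The plan is to study, for each fixed nonzero shift $a$, the difference polynomial $D_a(x) := f(x + a) - f(x)$ and to translate both the differential uniformity and the degree hypothesis into statements about the $D_a$. The crucial observation is that $\delta_f(a, b)$ is exactly the number of roots of $D_a(x) - b$ in $\Fq$, and since $f$ is taken with $\degree{f} < q$ (as an element of $\Fq[x]/(x^q - x)$), we have $\degree{D_a} < q$ as well; consequently a polynomial of this form that is constant as a function on $\Fq$ is genuinely a constant polynomial, which is what lets us move freely between the analytic quantity $\delta(f)$ and the algebraic quantity $\degree{D_a}$.

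First I would establish the equivalence. For ``$\Leftarrow$'', assume $\degree{D_a} > 0$ for every $a \in \Fqx$. Then for any $b$ the polynomial $D_a(x) - b$ is nonzero of degree $\degree{D_a} \leq q - 1$, hence has at most $\degree{D_a} < q$ roots in $\Fq$; taking the maximum over $a$ and $b$ yields $\delta(f) < q$. For the contrapositive of ``$\Rightarrow$'', suppose some $a \in \Fqx$ satisfies $\degree{D_a} \leq 0$, i.e.\ $D_a$ is a constant $c$ (including the zero polynomial, $c = 0$). Then $f(x + a) - f(x) = c$ holds for all $x \in \Fq$, so $\delta_f(a, c) = q$ and hence $\delta(f) = q$. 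This proves both directions.

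For the ``in particular'' statement I would first note that $\delta(f) < q$ forces $f$ to be non-constant, since a constant $f$ gives $D_a \equiv 0$ and thus $\delta(f) = q$; so $\degree{f} = d \geq 1$. The key step is the cancellation of leading terms: if $c_d x^d$ is the leading term of $f$, then $c_d (x + a)^d$ is the leading term of $f(x + a)$, and the two $x^d$ contributions cancel in $D_a$, giving $\degree{D_a} \leq d - 1$. Combining this with the equivalence already proved (which guarantees each $D_a$ is non-constant, so that $D_a - b$ is nonzero and the root bound applies) yields $\delta_f(a, b) \leq \degree{D_a} \leq \degree{f} - 1$ for every $a \in \Fqx$ and every $b$, whence $\delta(f) < \degree{f}$.

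I expect no serious obstacle; the only points requiring care are the degree bookkeeping and conventions. Specifically, one must check that ``$\degree{D_a} > 0$'' correctly excludes both the nonzero constants and the zero polynomial, and one must invoke $\degree{f} < q$ to identify constant functions on $\Fq$ with constant polynomials. The leading-term cancellation $\degree{D_a} \leq \degree{f} - 1$ holds in every characteristic because $(x + a)^d$ is monic, so no separate small-characteristic argument is needed.
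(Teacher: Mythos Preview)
Your proposal is correct and follows essentially the same approach as the paper: both arguments study the difference polynomial $D_a(x)=f(x+a)-f(x)$, use the root bound on $D_a(x)-b$ together with the leading-term cancellation $\degree{D_a}<\degree{f}$ for the forward direction, and argue by contraposition (choosing $b$ to match the constant value of $D_a$) for the reverse. The only cosmetic difference is that the paper folds the ``in particular'' clause into the ``$\Leftarrow$'' step (obtaining $\delta(f)<\degree{f}<q$ in one line), whereas you separate the equivalence from the sharpened bound.
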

    \begin{proof}
        ``$\Leftarrow$'': By assumption, for all $a \in \Fqx$ and all $b \in \Fq$ we have that $f(x + a) - f(x) - b$ is a non-constant polynomial whose degree is less than $\degree{f}$, so we have that $\delta (f) < \degree{f} < q$.

        ``$\Rightarrow$'': Suppose there exists an $a \in \Fqx$ such that $\deg \big( f (x - a) - f (x) \big) \leq 0$.\footnote{Some textbooks define $\degree{0} = -1$ or $\degree{0} = -\infty$, hence the inequality.}
        Then we can find $b \in \Fq$ such that $f(x + a) - f(x) - b = 0$, so $\delta (f) = q$.
        Now the claim follows by contraposition. \qed
    \end{proof}

    Let us now compute an upper bound for the differential uniformity of a GTDS.
    \begin{thm}\label{Th: differential distribution of GTDS}
        Let $\Fq$ be a finite field, let $n \geq 1$ be an integer, and let $\mathcal{F}: \Fqn \to \Fqn$ be a GTDS.
        Let $p_1, \dots, p_n \in \Fq [x] / (x^q - x)$ be the univariate permutation polynomials of the GTDS $\mathcal{F}$ such that for every $i$ either
        \begin{enumerate}[label=(\roman*)]
            \item $\degree{p_i} = 1$, or

            \item $\degree{p_i} \geq 2$ and $\delta (p_i) < q$.
        \end{enumerate}
        Let $\boldsymbol{\Delta x}, \boldsymbol{\Delta y} \in \Fqn$ be such that $\boldsymbol{\Delta x} \neq \mathbf{0}$.
        Then the differential distribution table of $\mathcal{F}$ at $\boldsymbol{\Delta x}$ and $\boldsymbol{\Delta y}$ is bounded by
        \begin{equation*}
            \begin{split}
                \delta_\mathcal{F} ( \boldsymbol{\Delta x}, \boldsymbol{\Delta y} )
                \leq
                &
                \begin{dcases}
                    \begin{rcases}
                        \delta (p_n), & \boldsymbol{\Delta x}_n \neq 0, \\
                        q, & \boldsymbol{\Delta x}_n, \boldsymbol{\Delta y}_n = 0, \\
                        0, & \boldsymbol{\Delta x}_n = 0,\ \boldsymbol{\Delta y}_n \neq 0
                    \end{rcases}
                \end{dcases}
                \\
                &\phantom{X}\cdot \prod_{i = 1}^{n - 1}
                \begin{dcases}
                    \begin{rcases}
                        \degree{p_i}, & \boldsymbol{\Delta x}_i \neq 0,\ \degree{p_i} > 1, \\
                        q, & \boldsymbol{\Delta x}_i \neq 0,\ \degree{p_i} = 1, \\
                        q, & \boldsymbol{\Delta x}_i = 0
                    \end{rcases}
                \end{dcases}
                .
            \end{split}
        \end{equation*}
    \end{thm}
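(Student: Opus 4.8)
The plan is to exploit the triangular structure of the GTDS so that the differential system decouples from the bottom up, and then to bound the number of solutions of each resulting single-variable equation separately. Writing out the defining equation of the differential distribution table, a tuple $\mathbf{x}$ is counted by $\delta_\mathcal{F}(\boldsymbol{\Delta x}, \boldsymbol{\Delta y})$ precisely when it satisfies the system $f_i(\mathbf{x} + \boldsymbol{\Delta x}) - f_i(\mathbf{x}) = \boldsymbol{\Delta y}_i$ for $1 \leq i \leq n$. The crucial observation is that the $i$-th equation involves only the variables $x_i, \dots, x_n$, so I would count solutions by first solving the last equation for $x_n$, then---for each admissible value of $x_n$---solving the $(n-1)$-st equation for $x_{n-1}$, and so on up to $x_1$. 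If I can show that, for every $i$ and every fixed choice of the higher variables $x_{i+1}, \dots, x_n$, the number of admissible values of $x_i$ is bounded by a quantity $N_i$ that is \emph{independent} of those higher values, then the total count is at most $\prod_{i=1}^n N_i$, which is exactly the claimed bound.

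For the base of the count, the last equation reads $p_n(x_n + \boldsymbol{\Delta x}_n) - p_n(x_n) = \boldsymbol{\Delta y}_n$, an equation purely in $x_n$. If $\boldsymbol{\Delta x}_n \neq 0$ its number of solutions is $\delta_{p_n}(\boldsymbol{\Delta x}_n, \boldsymbol{\Delta y}_n) \leq \delta(p_n)$; if $\boldsymbol{\Delta x}_n = 0$ the left-hand side vanishes identically, so there are $q$ solutions when $\boldsymbol{\Delta y}_n = 0$ and none when $\boldsymbol{\Delta y}_n \neq 0$. This reproduces the first factor of the bound.

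For $1 \leq i \leq n - 1$ and fixed higher variables, I abbreviate $A = g_i(x_{i + 1} + \boldsymbol{\Delta x}_{i + 1}, \dots, x_n + \boldsymbol{\Delta x}_n)$ and $B = g_i(x_{i + 1}, \dots, x_n)$, and collect the remaining constant terms (the $h_i$-contributions and $\boldsymbol{\Delta y}_i$) into a single $C \in \Fq$; the $i$-th equation then becomes $A \cdot p_i(x_i + \boldsymbol{\Delta x}_i) - B \cdot p_i(x_i) = C$. Here the hypothesis that $g_i$ has no zeros over $\Fq$ is essential: it guarantees $A, B \neq 0$. I would then distinguish cases. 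If $\boldsymbol{\Delta x}_i = 0$ or $\degree{p_i} = 1$, I only claim the trivial bound $q$, consistent with the statement. If $\boldsymbol{\Delta x}_i \neq 0$ and $\degree{p_i} > 1$, I split on whether $A = B$: when $A \neq B$, the polynomial $A \cdot p_i(x_i + \boldsymbol{\Delta x}_i) - B \cdot p_i(x_i) - C$ has leading term $(A - B)\,\mathrm{lc}(p_i)\,x_i^{\degree{p_i}} \neq 0$, since shifting the argument preserves the leading coefficient, so it has degree exactly $\degree{p_i}$ and hence at most $\degree{p_i}$ roots; when $A = B$, dividing by $A$ turns the equation into $p_i(x_i + \boldsymbol{\Delta x}_i) - p_i(x_i) = C/A$, whose solution count is $\delta_{p_i}(\boldsymbol{\Delta x}_i, C/A) \leq \delta(p_i) < \degree{p_i}$ by \Cref{Lem: degree and differential uniformity} together with hypothesis (ii). In either sub-case the number of admissible $x_i$ is at most $\degree{p_i}$, and this bound is independent of the fixed higher variables, so it serves as the uniform $N_i$.

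Finally, I would assemble these uniform per-coordinate bounds: iterating the elementary counting principle that each solution of the tail system (equations $i + 1, \dots, n$) extends to at most $N_i$ solutions of the system truncated at index $i$ yields $\delta_\mathcal{F}(\boldsymbol{\Delta x}, \boldsymbol{\Delta y}) \leq \prod_{i = 1}^n N_i$, which is the asserted product. I expect the main obstacle to be the case $\degree{p_i} > 1$ with $\boldsymbol{\Delta x}_i \neq 0$: the two sub-cases $A = B$ and $A \neq B$ must be reconciled into the single bound $\degree{p_i}$, and verifying that this bound is genuinely uniform in the higher variables is exactly what makes the clean product estimate legitimate.
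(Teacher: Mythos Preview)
Your proposal is correct and follows essentially the same approach as the paper: both exploit the triangular structure to solve the differential system from the bottom up, handle the $n$\textsuperscript{th} component identically, reduce the $i$\textsuperscript{th} equation (for fixed higher variables) to $A\cdot p_i(x_i+\boldsymbol{\Delta x}_i)-B\cdot p_i(x_i)=C$ with $A,B\neq 0$, and then split on $A=B$ versus $A\neq B$, invoking \Cref{Lem: degree and differential uniformity} in the former case to get the uniform bound $\degree{p_i}$. The only difference is cosmetic (your $A,B,C$ are the paper's $\alpha,\beta,\gamma$, and you spell out the product-counting principle that the paper summarizes as ``inductively work upwards'').
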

    \begin{proof}
        Suppose we are given the differential equation
        \begin{equation}\label{Equ: differential equation 1}
            \mathcal{F} (\mathbf{x} + \boldsymbol{\Delta x}) - \mathcal{F} (\mathbf{x}) = \boldsymbol{\Delta y},
        \end{equation}
        Then, the last component of the differential equation only depends on the variable $x_n$, i.e.,
        \begin{equation*}
            p_n (x_n + \boldsymbol{\Delta x}_n) - p_n (x_n) = \boldsymbol{\Delta y}_n.
        \end{equation*}
        If $\boldsymbol{\Delta x}_n \neq 0$, then this equation has at most $\delta (p_n)$ many solutions.
        If $\boldsymbol{\Delta x}_n = \boldsymbol{\Delta y}_n = 0$, then this equation has $q$ many solutions for $x_n$.
        Lastly, if $\boldsymbol{\Delta x}_n = 0$ and $\boldsymbol{\Delta y}_n \neq 0$, then there cannot be any solution for $x_n$.

        Now suppose we have a solution for the last component, say $\hat{x}_n \in \Fq$.
        Then, we can substitute it in \Cref{Equ: differential equation 1} into the $(n - 1)$\textsuperscript{th} component
        \begin{equation*}
            f_{n - 1} (x_{n - 1} + \boldsymbol{\Delta x}_{n - 1}, \hat{x}_n + \boldsymbol{\Delta x}_n) - f_{n - 1} (x_{n - 1}, \hat{x}_n) = \boldsymbol{\Delta y}_{n - 1}.
        \end{equation*}
        Since $\hat{x}_n$ is a field element we can reduce this equation to
        \begin{equation}\label{Equ: differential equation 2}
            \alpha \cdot p_{n - 1} (x_{n - 1} + \boldsymbol{\Delta x}_{n - 1} ) - \beta \cdot p_{n - 1} (x_{n - 1}) + \gamma = \boldsymbol{\Delta y}_{n - 1},
        \end{equation}
        where $\alpha, \beta, \gamma \in \Fq$ and $\alpha, \beta \neq 0$.
        Now we have to do a case distinction on the various case for $\alpha$, $\beta$, $\boldsymbol{\Delta x}_{n - 1}$ and $\degree{p_{n - 1}}$.
        \begin{itemize}
            \item For $\boldsymbol{\Delta x}_{n - 1} \neq 0$ and $\alpha \neq \beta$, then \Cref{Equ: differential equation 2} has at most $\degree{p_{n - 1}}$ many solutions.

            \item For $\boldsymbol{\Delta x}_{n - 1} \neq 0$, $\alpha = \beta$ and $\degree{p_{n - 1}} > 1$, \Cref{Equ: differential equation 2} is the differential equation for $p_{n - 1}$ scaled by $\alpha$ and by assumption this equation has at most $\delta (p_{n - 1}) < q$ many solutions.
            So we can apply \Cref{Lem: degree and differential uniformity} to immediately conclude that $\delta (p_{n - 1}) < \degree{p_{n - 1}}$.

            \item For $\alpha = \beta$ and $\degree{p_{n - 1}} = 1$, then only constant terms remain in \Cref{Equ: differential equation 2}.
            In principle, it can happen that $\alpha \cdot a_{n - 1, 1} \cdot \boldsymbol{\Delta x}_{n - 1} + \gamma = \boldsymbol{\Delta y}_{n - 1}$, where $a_{n - 1, 1} \in \Fqx$ is the coefficient of the linear term of $p_{n - 1}$.
            So this case can have at most $q$ many solutions.

            \item For $\boldsymbol{\Delta x}_{n - 1} = 0$, then in principle it can happen that $\alpha = \beta$ and $\boldsymbol{\Delta y}_{n - 1} = \gamma$.
            So this case can have at most $q$ many solutions.
        \end{itemize}
        Summarizing these cases we conclude that
        \begin{itemize}
            \item If $\boldsymbol{\Delta x}_{n - 1} \neq 0$ and $\degree{p_{n - 1}} > 1$, then \Cref{Equ: differential equation 2} has at most $\degree{p_{n - 1}}$ many solutions.

            \item If $\boldsymbol{\Delta x}_{n - 1} \neq 0$ and $\degree{p_{n - 1}} = 1$, then \Cref{Equ: differential equation 2} has at most $q$ many solutions.

            \item If $\boldsymbol{\Delta x}_{n - 1} = 0$, then \Cref{Equ: differential equation 2} has at most $q$ many solutions.
        \end{itemize}
        Inductively, we now work upwards through the branches to derive the claim. \qed
    \end{proof}

    Let the function $\wt: \Fqn \to \mathbb{Z}$ denote the Hamming weight, i.e.\ it counts the number of non-zero entries of a vector in $\Fqn$.
    \begin{cor}\label{Cor: differential uniformity}
        Let $\Fq$ be a finite field, let $n \geq 1$ be an integer, and let $\mathcal{F}: \Fqn \to \Fqn$ be a GTDS.
        Let $p_1, \dots, p_n \in \Fq [x] / (x^q - x)$ be the univariate permutation polynomials of the GTDS $\mathcal{F}$, and let $\boldsymbol{\Delta x}, \boldsymbol{\Delta y} \in \Fqn$ be such that $\boldsymbol{\Delta x} \neq \mathbf{0}$.
        If for all $1 \leq i \leq n$ one has that $1 < \degree{p_i} \leq d$ and $\delta (p_i) < q$, then
        \begin{equation*}
            \delta_\mathcal{F} (\boldsymbol{\Delta x}, \boldsymbol{\Delta y}) \leq q^{n - \wt (\boldsymbol{\Delta x})} \cdot d^{\wt (\boldsymbol{\Delta x})}.
        \end{equation*}
        In particular,
        \begin{equation*}
            \prob \left[ \mathcal{F} \! : \boldsymbol{\Delta x} \to \boldsymbol{\Delta y} \right] \leq \left( \frac{d}{q} \right)^{\wt (\boldsymbol{\Delta x})}.
        \end{equation*}
    \end{cor}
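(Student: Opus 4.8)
The plan is to read this off from \Cref{Th: differential distribution of GTDS} by bounding every factor in the product furnished by that theorem and then counting the factors of each type. First I would check that the hypotheses of the corollary, namely $1 < \degree{p_i} \leq d$ and $\delta(p_i) < q$ for all $i$, place every branch into case (ii) of the theorem, so that the theorem applies and yields $\delta_\mathcal{F}(\boldsymbol{\Delta x}, \boldsymbol{\Delta y})$ as a product of one factor coming from the top branch $p_n$ and $n - 1$ factors coming from $p_1, \dots, p_{n - 1}$. A useful consequence of $\degree{p_i} > 1$ is that the degenerate subcase ``$\boldsymbol{\Delta x}_i \neq 0$ and $\degree{p_i} = 1$'', which would force a factor $q$, never arises.

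Next I would bound the individual factors. For an index $1 \leq i \leq n - 1$ with $\boldsymbol{\Delta x}_i \neq 0$ the theorem gives the factor $\degree{p_i} \leq d$. For the top branch with $\boldsymbol{\Delta x}_n \neq 0$ the factor is $\delta(p_n)$; here I would combine the hypothesis $\delta(p_n) < q$ with \Cref{Lem: degree and differential uniformity} to deduce $\delta(p_n) < \degree{p_n} \leq d$, so this factor is again at most $d$. For every index with $\boldsymbol{\Delta x}_i = 0$ the corresponding factor is at most $q$, while the remaining subcase $\boldsymbol{\Delta x}_n = 0$, $\boldsymbol{\Delta y}_n \neq 0$ contributes $0$ and only strengthens the bound.

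Since precisely $\wt(\boldsymbol{\Delta x})$ of the indices have $\boldsymbol{\Delta x}_i \neq 0$ and the remaining $n - \wt(\boldsymbol{\Delta x})$ have $\boldsymbol{\Delta x}_i = 0$, multiplying the factors gives
\begin{equation*}
    \delta_\mathcal{F}(\boldsymbol{\Delta x}, \boldsymbol{\Delta y}) \leq d^{\wt(\boldsymbol{\Delta x})} \cdot q^{n - \wt(\boldsymbol{\Delta x})},
\end{equation*}
which is the first assertion. For the probability statement I would use that $\prob[\mathcal{F} : \boldsymbol{\Delta x} \to \boldsymbol{\Delta y}]$ is the number of solutions $\delta_\mathcal{F}(\boldsymbol{\Delta x}, \boldsymbol{\Delta y})$ divided by the domain size $q^n$; dividing the displayed bound by $q^n$ then yields $\left( d / q \right)^{\wt(\boldsymbol{\Delta x})}$.

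I expect no serious obstacle, as the argument is essentially bookkeeping on top of the theorem. The one step that is not purely formal is the handling of the top branch, where the theorem supplies $\delta(p_n)$ rather than a degree; replacing it by the uniform bound $d$ is exactly where \Cref{Lem: degree and differential uniformity} and the hypothesis $\delta(p_i) < q$ enter, and I would take care to flag this dependence.
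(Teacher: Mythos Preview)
Your proposal is correct and follows essentially the same approach as the paper: invoke \Cref{Th: differential distribution of GTDS}, use \Cref{Lem: degree and differential uniformity} together with the hypothesis $\delta(p_i) < q$ to replace $\delta(p_n)$ by a degree bound, and divide by $q^n$ for the probability. Your write-up is in fact more careful than the paper's, which compresses all of the bookkeeping into a single sentence; in particular you make explicit why the degenerate $\degree{p_i}=1$ subcase never arises and how the factors of $d$ and $q$ are counted via the Hamming weight.
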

    \begin{proof}
        If $f \in \Fq [x]$ is a polynomial such that $f (x + a) - f (x)$ is a non-constant polynomial for all $a \in \Fqx$, then $\delta (f) < \degree{f}$.
        Now we apply this observation to \Cref{Th: differential distribution of GTDS}.
        The bound for the probability from the first and division by $q^n$. \qed
    \end{proof}

    Let $p_1, \dots, p_n \in \Fq [x]$ be univariate permutation polynomials that satisfy the assumption from \Cref{Th: differential distribution of GTDS} and assume that $1 < \delta (p_i) \leq d$ for all $i$.
    Let us consider the SPN
    \begin{equation}\label{Equ: SPN}
        S: \left( x_1, \dots, x_n \right) \mapsto \big( p_1 (x_1), \dots, p_n (x_n) \big).
    \end{equation}
    It is well-known that
    \begin{equation}\label{Equ: bound for SPN}
        \prob \left[ S \! : \boldsymbol{\Delta x} \to \boldsymbol{\Delta y} \right] \leq \left( \frac{d}{q} \right)^{\wt \left( \boldsymbol{\Delta x} \right)}.
    \end{equation}
    Now let $\mathcal{F}: \Fqn \to \Fqn$ be a GTDS with the univariate permutation polynomials $p_1, \dots, p_n$.
    Provided that $\delta (p_i) \approx \degree{p_i}$ when compared to $q$, then we expect that the bound from \Cref{Cor: differential uniformity} almost coincides with \Cref{Equ: bound for SPN}.
    I.e., the GTDS $\mathcal{F}$ and the SPN $S$ are in almost the same security class with respect to differential cryptanalysis.
    What is the contribution of the polynomials $g_i$ and $h_i$ in the GTDS $\mathcal{F}$ then?
    Conceptually, they can only lower the probability compared to the ``SPN bounds'' from \Cref{Equ: bound for SPN} but never increase it.

    Of course, this now raises the question of how this contribution can be incorporated into an improved bound.
    If we recall the proof of the theorem, then we can translate this question into the following problem: Let $f \in \Fq [x]$ be a polynomial, let $\alpha, \beta, \Delta x \in \Fqx$ and $\delta, \Delta y \in \Fq$.
    How many solutions does the equation
    \begin{equation}
        \alpha \cdot f(x + \Delta x) - \beta \cdot f (x) + \gamma = \Delta y
    \end{equation}
    have?
    Moreover, one could try to estimate the codomains of the $g_i$'s and $h_i$'s to exclude values for $\alpha, \beta, \gamma$ than can never arise in the differential equation of the GTDS.

    For the application of \Cref{Th: differential distribution of GTDS} it is crucial that one knows that the univariate permutation polynomials have non-trivial differential uniformity.
    Therefore, we derive two efficient criteria that bypass the computation of the full differential distribution table.
    \begin{lem}\label{Lem: difference polynomial non-constant}
        Let $\Fq$ be a finite field of characteristic $p$, let $a \in \Fqx$, and let $f = \sum_{i = 0}^{d} b_i \cdot x^i \in \Fq [x] / (x^q - x)$ be such that $d = \degree{f} > 1$.
        \begin{enumerate}
            \item\label{Item: criterium 1} If $q$ is prime, then $f (x + a) - f (x)$ is a non-constant polynomial.

            \item\label{Item: criterium 2} If $q$ is a prime power, let $d' = \max \left\{ \degree{f - b_d \cdot x^d}, 1 \right\}$.
            If there exists $d' \leq k \leq d - 1$ such that $\gcd \left( p, \binom{d}{k} \right) = 1$, then $f (x + a) - f (x)$ is a non-constant polynomial.
        \end{enumerate}
    \end{lem}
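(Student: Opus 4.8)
The plan is to expand the difference polynomial with the binomial theorem and read off an explicit coefficient of positive degree that is forced to be nonzero. Writing $f(x+a) = \sum_{i=0}^{d} b_i (x+a)^i$ and subtracting $f(x)$, the leading terms $b_d x^d$ cancel, the $i = k$ contribution cancels the $x^k$ term of $f(x)$, and one is left with
\begin{equation*}
    [x^k]\big( f(x+a) - f(x) \big) = \sum_{i = k+1}^{d} b_i \binom{i}{k} a^{i - k}.
\end{equation*}
Since $\degree{f} = d \leq q - 1$, both $f(x+a)$ and $f(x)$ have degree below $q$, so no reduction modulo $x^q - x$ intervenes and the displayed expression is literally the coefficient in the reduced representative. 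The whole proof then amounts to producing a $k$ with $0 < k \leq d-1$ for which this coefficient does not vanish.

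For part \ref{Item: criterium 1} I would take $k = d - 1$. The sum collapses to its single term $i = d$, giving coefficient $b_d \binom{d}{d-1} a = d \cdot b_d \cdot a$. Because $q = p$ is prime and $1 < d \leq q - 1 = p - 1$, we have $p \nmid d$, so the coefficient is nonzero; as $d - 1 \geq 1$, the difference polynomial is non-constant.

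For part \ref{Item: criterium 2} the key idea is that choosing $k \geq d'$ annihilates every intermediate term and isolates the top contribution. Since $d' = \max\{\degree{f - b_d x^d}, 1\} \geq \degree{f - b_d x^d}$, every coefficient $b_i$ with $d' < i < d$ vanishes. Hence for any $k$ with $d' \leq k \leq d - 1$ each index $i$ in the summation range $k < i < d$ satisfies $b_i = 0$, and the formula collapses to
\begin{equation*}
    [x^k]\big( f(x+a) - f(x) \big) = b_d \binom{d}{k} a^{d - k}.
\end{equation*}
As $b_d \neq 0$ and $a \in \Fqx$, this is nonzero exactly when $\binom{d}{k} \not\equiv 0 \pmod p$, i.e.\ when $\gcd\big(p, \binom{d}{k}\big) = 1$. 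Picking the index $k$ supplied by the hypothesis and noting $k \geq d' \geq 1$ yields a nonzero coefficient of positive degree, proving non-constancy.

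The one point that needs care is the bookkeeping around the truncation $d'$, in particular the boundary cases in which $f - b_d x^d$ is either zero or a nonzero constant. There $\degree{f - b_d x^d} \leq 0$, so the maximum forces $d' = 1$ and all of $b_1, \dots, b_{d-1}$ vanish, which is precisely what is needed for the collapse to the single term $i = d$ to remain valid for every $1 \leq k \leq d - 1$. Apart from verifying this edge case and the routine binomial expansion above, I expect no genuine obstacle.
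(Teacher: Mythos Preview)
Your proposal is correct and follows essentially the same approach as the paper: expand $f(x+a)-f(x)$ via the binomial theorem and exhibit a specific coefficient of positive degree that cannot vanish. For part~\ref{Item: criterium 1} both you and the paper single out the coefficient of $x^{d-1}$, and for part~\ref{Item: criterium 2} both arguments rely on the fact that the gap in coefficients above index $d'$ forces the coefficient of $x^k$ to reduce to $b_d\binom{d}{k}a^{d-k}$; your write-up is simply more explicit about this collapse and about the boundary case $d'=1$, whereas the paper leaves these details implicit.
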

    \begin{proof}
        For (1), we expand $f$ via the binomial formula
        \begin{align*}
            f (x + a) - f (x)
            &= \sum_{i = 0}^{\degree{f}} b_i \cdot \left( (x + a)^i - x^i \right) \\
            &= \sum_{i = 0}^{\degree{f}} b_i \cdot \left( \sum_{k = 0}^{i - 1} \binom{i}{k} \cdot a^{i - k} \cdot x^k \right) \\
            &= a_d \cdot \binom{d}{d - 1} \cdot a \cdot x^{d - 1} + g (x),
        \end{align*}
        where $\degree{g} < d - 1$.
        Since $d < q$ and $q$ is prime we always have that $\binom{d}{d - 1} \not\equiv 0 \mod q$.

        For (2), the assumption on the binomial coefficient guarantees that at least one binomial coefficient $\binom{d}{k}$, where $d' \leq k \leq d - 1$, is non-zero in $\Fq$. \qed
    \end{proof}

    By \ref{Item: criterium 1}, over prime fields we can apply \Cref{Th: differential distribution of GTDS} for every univariate permutation polynomial of degree greater than $1$.
    With \ref{Item: criterium 2} we can settle some polynomials $f \in \Fq [x] / (x^q - x)$ such that $\gcd \big( q, \deg (f) \big) \neq 1$.
    E.g., let $q = 2^n$, and let $f = x^{2^n - 2}$, then
    \begin{equation}
        \binom{2^n - 2}{2^n - 4} = \left( 2^n - 3 \right) \cdot \left( 2^{n - 1} - 1 \right) \equiv 1 \mod 2.
    \end{equation}

    Finally, let us discuss when \Cref{Th: differential distribution of GTDS} provides viable bounds for differential cryptanalysis.
    Classical symmetric cryptography is designed to be efficiently on bit based hard- and software.
    So these designs can be modeled over $\F_{2^n}^{m}$, where $m, n \geq 1$.
    Though, the polynomial degree of components in these primitives is usually of minor concern in design as well as cryptanalysis.
    As consequence, many designs were proposed that have high polynomial degrees but still can be efficiently evaluated.
    The prime example is the AES S-box which is based on the inversion permutation $x^{q - 2}$.
    If we instantiate a GTDS with the inversion permutation and apply \Cref{Cor: differential uniformity}, then we obtain the bound $\frac{q - 2}{q}$ for the respective component.
    Needless to say that this bound will be hardly of use for cryptanalysis.
    On the other hand, if we take a look to symmetric primitives targeting Multi-Party Computation and Zero-Knowledge protocols, then \Cref{Th: differential distribution of GTDS} becomes viable.
    Typically, these protocols are instantiated over prime fields $p \geq 2^{64}$, and they require a symmetric cipher or hash function which requires a very low number of multiplications for evaluation.
    Moreover, for an univariate permutation polynomial $f \in \Fp [x] / (x^p - x)$ in an AOC designs one often has that $\degree{f} < 2^9$ or $\degree{f^{-1}} < 2^9$, so we obtain a bound which is less than $\frac{2^9}{2^{64}}$ for the respective component.
    For an iterated design this bound is small enough to provide resistance against differential cryptanalysis and its variants.

    We also want to highlight that \Cref{Th: differential distribution of GTDS} has been applied in the differential cryptanalysis of \Arion \cite[\S 3.1]{Arion}.

    \subsection{A Bound on the Correlation of the GTDS}
    Linear cryptanalysis was introduced in \cite{EC:Matsui93} and extended to arbitrary finite fields in \cite{SAC:BaiSteVau07}.
    For the attack one tries to find affine approximations of the rounds of a block cipher for a sample of known plaintexts.
    The key measure to quantify whether a function is weak to linear cryptanalysis is the so-called correlation.
    In this section we will prove an upper bound for the differential uniformity of the GTDS under minimal assumptions on the polynomials $p_i$, $g_i$ and $h_i$.
    We recall the definition of correlation.
    \begin{defn}[{see \cite[Definition~6, 15]{SAC:BaiSteVau07}}]\label{Def: correlation}
        Let $\Fq$ be a finite field, let $n \geq 1$, let $\chi: \Fq \to \mathbb{C}$ be a non-trivial additive character, let $F: \Fqn \to \Fqn$ be a function, and let $\mathbf{a}, \mathbf{b} \in \Fqn$.
        \begin{enumerate}
            \item\label{Item: correlation} The correlation for the character $\chi$ of the linear approximation $(\mathbf{a}, \mathbf{b})$ of $F$ is defined as
            \begin{equation*}
                \CORR_F (\chi, \mathbf{a}, \mathbf{b}) = \frac{1}{q^n} \cdot \sum_{\mathbf{x} \in \Fqn} \chi \Big( \big< \mathbf{a}, F (\mathbf{x}) \big> + \braket{\mathbf{b}, \mathbf{x}} \! \Big).
            \end{equation*}

            \item The linear probability for the character $\chi$ of the linear approximation $(\mathbf{a}, \mathbf{b})$ of $F$ is defined as
            \begin{equation*}
                \LP_F (\chi, \mathbf{a}, \mathbf{b}) = \left| \CORR_F (\chi, \mathbf{a}, \mathbf{b})  \right|^2.
            \end{equation*}
        \end{enumerate}
    \end{defn}
    \begin{rem}
        To be precise Baign\`eres et al.\ \cite{SAC:BaiSteVau07} defined linear cryptanalysis over arbitrary abelian groups, in particular for maximal generality they defined the correlation with respect to two additive characters $\chi, \psi: \Fq \to \mathbb{C}$ as
        \begin{equation}\label{Equ: original correlation definition}
            \CORR_F (\chi, \psi, \mathbf{a}, \mathbf{b}) = \frac{1}{q^n} \cdot \sum_{\mathbf{x} \in \Fqn} \chi \Big( \big< \mathbf{a}, F (\mathbf{x}) \big> \Big) \cdot \psi \Big( \big< \mathbf{b}, \mathbf{x} \big> \! \Big).
        \end{equation}
        Let $\Fq$ be a finite field of characteristic $p$, and let $\Tr: \Fq \to \Fp$ be the absolute trace function, see \cite[2.22.~Definition]{Niederreiter-FiniteFields}.
        For all $x \in \Fq$ we define the function $\chi_1$ as
        \begin{equation*}
            \chi_1 (x) = \exp \left(\frac{2 \pi i}{p} \cdot \Tr (x) \right).
        \end{equation*}
        Then for every non-trivial additive character $\chi: \Fq \to \mathbb{C}$ there exist $a \in \Fqx$ such that $\chi (x) = \chi_1 (a \cdot x)$, see \cite[5.7.~Theorem]{Niederreiter-FiniteFields}.
        Therefore, after an appropriate rescaling that we either absorb into $\mathbf{a}$ or $\mathbf{b}$ we can transform \Cref{Equ: original correlation definition} into \Cref{Def: correlation} \ref{Item: correlation}.
    \end{rem}

    If we linearly approximate every round of a block cipher $\mathcal{C}_r: \Fqn \times \Fq^{n \times (r + 1)} \to \Fqn$, then a tuple $\Omega = (\boldsymbol{\omega}_0, \dots, \boldsymbol{\omega}_r) \subset \left( \Fqn \right)^{r + 1}$ is called a linear trail for $\mathcal{C}_r$, where $(\boldsymbol{\omega}_{i - 1}, \boldsymbol{\omega}_i)$ is the linear approximation of the $i$\textsuperscript{th} round $\mathcal{R}^{(i)}$ of $\mathcal{C}_r$.
    For an additive character $\chi: \Fq \to \mathbb{C}$ and under the assumption that the rounds of $\mathcal{C}_r$ are statistically independent, we denote the linear probability of the linear trail $\Omega$ of $\mathcal{C}_r$ by
    \begin{equation}
        \LP_{\mathcal{C}_r} (\chi, \Omega) = \prod_{i = 1}^{r} \LP_{\mathcal{R}^{(i)}} (\chi, \boldsymbol{\omega}_{i - 1}, \boldsymbol{\omega}_i).
    \end{equation}
    If a distinguisher is limited to $N$ queries, then by \cite[Theorem~7]{SAC:BaiSteVau07} the advantage of a linear distinguisher with a single linear trail is lower bounded under heuristic assumptions by
    \begin{equation}
        p_{success} \succeq 1 - e^{-\frac{N}{4} \cdot \LP_{\mathcal{C}_r} (\chi, \Omega)}.
    \end{equation}
    Moreover, for any function $F: \Fqn \to \Fqn$ and any $\mathbf{A} \in \Fqnxn$ and $\mathbf{c} \in \Fqn$ one has
    \begin{equation}
        \LP_{\mathbf{A} F + \mathbf{c}} (\chi, \mathbf{a}, \mathbf{b}) = \LP_F (\chi, \mathbf{A}^\intercal \mathbf{a}, \mathbf{b}).
    \end{equation}
    Therefore, bounding the correlation of the GTDS is the key ingredient to estimate the resistance of a block cipher against linear cryptanalysis.

    As preparation, we prove a bound on univariate character sums which follows as corollary to \cite[5.38. Theorem]{Niederreiter-FiniteFields}.
    \begin{lem}\label{Lem: exponential sum}
        Let $\Fq$ be a finite field, let $\chi: \Fq \to \mathbb{C}$ be a non-trivial additive character, let $f \in \Fq [x]$ be a permutation polynomial such that $\gcd \left( \degree{f}, q \right) = \gcd \left( \degree{f^{-1}} , q \right) = 1$, and let $a, b \in \Fqx$.
        Then
        \begin{equation*}
            \left| \sum_{x \in \Fq} \chi \left( a \cdot f (x) + b \cdot x\right) \right| \leq \Big( \min \Big\{ \degree{f}, \degree{f^{-1}} \Big\} - 1 \Big) \cdot q^{1 / 2}.
        \end{equation*}
    \end{lem}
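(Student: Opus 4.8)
The plan is to reduce the statement to the Weil bound for additive character sums, namely \cite[5.38.~Theorem]{Niederreiter-FiniteFields}: for any $g \in \Fq[x]$ of degree $m \geq 1$ with $\gcd(m, q) = 1$ and any non-trivial additive character $\chi$, one has $\left\vert \sum_{x \in \Fq} \chi(g(x)) \right\vert \leq (m - 1) \cdot q^{1/2}$. The idea is to apply this bound twice, once in the ``forward'' direction through $f$ and once in the ``backward'' direction through $f^{-1}$, and then to retain whichever of the two bounds is smaller.

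First I would treat the sum as it stands. Set $g(x) = a \cdot f(x) + b \cdot x$. Since $a \in \Fqx$ and $\degree{f} \geq 2$, the term $a \cdot f(x)$ strictly dominates the linear term $b \cdot x$, so $\degree{g} = \degree{f}$, and in particular $\gcd(\degree{g}, q) = \gcd(\degree{f}, q) = 1$ by hypothesis. Applying the Weil bound to $g$ then yields $\left\vert \sum_{x \in \Fq} \chi(a f(x) + b x) \right\vert \leq (\degree{f} - 1) \cdot q^{1/2}$.

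Next I would exploit that $f$ is a permutation polynomial to re-express the \emph{same} sum through $f^{-1}$. As $x$ ranges over $\Fq$, the element $y = f(x)$ ranges bijectively over $\Fq$ with $x = f^{-1}(y)$, so reindexing gives
\begin{equation*}
    \sum_{x \in \Fq} \chi\big( a f(x) + b x \big) = \sum_{y \in \Fq} \chi\big( a y + b f^{-1}(y) \big).
\end{equation*}
Now set $h(y) = b \cdot f^{-1}(y) + a \cdot y$; by the same degree argument (using $b \in \Fqx$ and $\degree{f^{-1}} \geq 2$) one has $\degree{h} = \degree{f^{-1}}$ and $\gcd(\degree{h}, q) = 1$, so a second application of the Weil bound gives $\left\vert \sum_{x \in \Fq} \chi(a f(x) + b x) \right\vert \leq (\degree{f^{-1}} - 1) \cdot q^{1/2}$. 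Taking the minimum of the two bounds yields the claim.

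The main subtlety to watch is the degenerate case in which $a f(x) + b x$, or its $f^{-1}$ counterpart, drops in degree: this can only occur when $\degree{f} = 1$ and the linear terms cancel, in which case the polynomial becomes constant, the Weil bound no longer applies, and the sum equals $q$. Whenever $\degree{f}, \degree{f^{-1}} \geq 2$ — the regime relevant to the arithmetization-oriented instantiations of interest — the leading term of $f$ (respectively $f^{-1}$) is never cancelled by the added linear term, so the two degree identities above hold and both applications of the Weil bound are legitimate; I would record this as the standing hypothesis that makes the argument go through.
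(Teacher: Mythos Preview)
Your proof is correct and follows essentially the same route as the paper: reindex the sum via the bijection $y = f(x)$ to pass between the $f$-form and the $f^{-1}$-form, then apply the Weil bound \cite[5.38.~Theorem]{Niederreiter-FiniteFields} to each and keep the smaller one. The paper's proof is terser (it only writes out the reindexing step and then invokes Weil), but the underlying argument is identical.

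Your explicit discussion of the degenerate case $\degree{f} = 1$ is in fact more careful than the paper. As you observe, when $f$ is linear the polynomial $a f(x) + b x$ can collapse to a constant and the sum has absolute value $q$, whereas the stated bound is $0$; the lemma as written does not exclude this, and the paper's proof is silent on it. The paper sidesteps the issue only at the point of application (\Cref{Th: bound on correlation}), where the case $\degree{p_j} = 1$ is handled separately with the trivial bound. Recording $\degree{f} \geq 2$ as a standing hypothesis, as you do, is the clean fix.
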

    \begin{proof}
        Since $f$ is a permutation polynomial we can rewrite the character sum
        \begin{align*}
            \sum_{x \in \Fq} \chi \left( a \cdot f (x) + b \cdot x \right)
            &= \sum_{y \in \Fq} \chi \Big( a \cdot f \left( f^{-1} (y) \right) + b \cdot f^{-1} (y) \Big) \\
            &= \sum_{y \in \Fq} \chi \Big( a \cdot y + b \cdot f^{-1} (y) \Big),
        \end{align*}
        where the second equality follows from $f \big( f^{-1} (x) \big) \equiv x \mod \left( x^q - x \right)$.
        By our assumptions we can then apply the Weil bound \cite[5.38. Theorem]{Niederreiter-FiniteFields} to obtain the inequality. \qed
    \end{proof}

    Now we can compute an upper bound on the correlation of the GTDS.
    \begin{thm}\label{Th: bound on correlation}
        Let $\Fq$ be a finite field, let $n \geq 1$, let $\chi: \Fq \to \mathbb{C}$ be a non-trivial additive character, let $\mathcal{F} = \{ f_1, \dots, f_n \} \subset \Fq [x_1, \dots, x_n]$ be a GTDS, let $p_1, \dots, p_n \in \Fq [x]/ (x^q - x)$ be the univariate permutation polynomials in the GTDS $\mathcal{F}$ such that $\gcd \left( \degree{p_i}, q \right) = \gcd \left( \degree{p_i^{-1}}, q \right) = 1$ for all $1 \leq i \leq n$, and let $\mathbf{a}, \mathbf{b} \in \Fqn$.
        If $\mathbf{a} \neq \mathbf{0}$ denote with $1 \leq j \leq n$ the first index such that $a_j \neq 0$.
        Then
        \begin{equation*}
            \left| \CORR_\mathcal{F} (\chi, \mathbf{a}, \mathbf{b}) \right| \leq
            \begin{dcases}
                1, & \mathbf{a}, \mathbf{b} = \mathbf{0}, \\
                0, &
                \left\{
                \begin{array}{l}
                    \mathbf{a} = \mathbf{0},\ \mathbf{b} \neq \mathbf{0}, \\
                    \mathbf{a} \neq \mathbf{0},\ \mathbf{b} = \mathbf{0},
                \end{array}
                \right. \\
                0, & b_j = 0, \\
                1, & b_j \neq 0,\ \degree{p_j} = 1, \\
                \frac{\min \Big\{ \degree{p_j}, \degree{p_j^{-1}} \Big\} - 1}{\sqrt{q}}, & b_j \neq 0,\ \degree{p_j} > 1.
            \end{dcases}
        \end{equation*}
    \end{thm}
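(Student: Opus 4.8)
The plan is to exploit the triangular structure of the GTDS to reduce the $n$-dimensional character sum to a product of one-dimensional sums, of which only one is non-trivial. First I would dispose of the degenerate cases: when $\mathbf{a} = \mathbf{b} = \mathbf{0}$ every summand equals $\chi(0) = 1$, so the correlation is $1$; when $\mathbf{a} = \mathbf{0}$ but $\mathbf{b} \neq \mathbf{0}$ the sum factors as $\prod_{i=1}^n \sum_{x_i \in \Fq} \chi(b_i x_i)$ and at least one factor vanishes by orthogonality of the non-trivial additive character, giving $0$. The remaining case $\mathbf{a} \neq \mathbf{0},\ \mathbf{b} = \mathbf{0}$ will be subsumed by the generic analysis below, since it lands in the branch $b_j = 0$.

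For $\mathbf{a} \neq \mathbf{0}$ let $j$ be the first index with $a_j \neq 0$. The crucial observation is that, by the triangular shape of the GTDS, each $f_i$ depends only on $x_i, \dots, x_n$; hence $f_j, \dots, f_n$ are independent of $x_1, \dots, x_{j-1}$, and since $a_1 = \dots = a_{j-1} = 0$ the variables $x_1, \dots, x_{j-1}$ enter the character argument only through the linear terms $b_i x_i$. I would therefore split off $\prod_{i=1}^{j-1} \sum_{x_i \in \Fq} \chi(b_i x_i)$: if some $b_i \neq 0$ with $i < j$ this product is $0$ (and the claimed bounds, all non-negative, hold trivially), and otherwise it contributes a clean factor $q^{j-1}$.

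Next I would isolate $x_j$. Because $f_i$ for $i > j$ does not involve $x_j$, the only $x_j$-dependence sits in $a_j f_j + b_j x_j = \alpha \, p_j(x_j) + b_j x_j + (\text{terms free of } x_j)$, where $\alpha = a_j \, g_j(x_{j+1}, \dots, x_n)$, and $\alpha = a_j$ when $j = n$. Here the no-zero hypothesis on $g_j$ together with $a_j \neq 0$ guarantees $\alpha \in \Fqx$. Pulling the $x_j$-free part out of the inner sum and applying the triangle inequality over $x_{j+1}, \dots, x_n$ (a factor of at most $q^{n-j}$), the whole estimate reduces to bounding the univariate sum $\big| \sum_{x_j \in \Fq} \chi(\alpha \, p_j(x_j) + b_j x_j) \big|$. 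If $b_j = 0$, then since $p_j$ is a permutation and $\alpha \neq 0$ this equals $\big|\sum_{y \in \Fq} \chi(\alpha y)\big| = 0$; if $b_j \neq 0$ and $\degree{p_j} = 1$ the summand is a non-trivial additive character in $x_j$ unless its linear coefficient cancels, so the sum is bounded by $q$; and if $b_j \neq 0$ with $\degree{p_j} > 1$ the coprimality hypotheses let me invoke \Cref{Lem: exponential sum} to bound it by $\big(\min\{\degree{p_j}, \degree{p_j^{-1}}\} - 1\big)\sqrt{q}$.

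Finally I would assemble the pieces: the correlation is $q^{-n}$ times the product of the $q^{j-1}$ factor, the $q^{n-j}$ factor, and the inner bound, so the powers of $q$ collapse to $q^{-1}$ and reproduce exactly the three non-trivial entries, namely $0$, $1$, and $\tfrac{\min\{\degree{p_j}, \degree{p_j^{-1}}\}-1}{\sqrt{q}}$. The only genuinely analytic input is the Weil-type estimate packaged in \Cref{Lem: exponential sum}; everything else is bookkeeping driven by the triangular structure. I therefore expect the main subtlety to be purely organizational: correctly tracking that $\alpha$ is a unit — which is precisely where the hypothesis that the $g_i$ have no zeros is used — and verifying that the cancellations in the linear case $\degree{p_j}=1$ never force the inner sum above $q$.
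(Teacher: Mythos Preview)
Your proposal is correct and follows essentially the same route as the paper: isolate the index $j$, peel off $x_1,\dots,x_{j-1}$ via the triangular structure, apply the triangle inequality over $x_{j+1},\dots,x_n$, and bound the remaining univariate sum with \Cref{Lem: exponential sum}. Two minor organizational differences: the paper handles the case $\mathbf{a}\neq\mathbf{0},\ \mathbf{b}=\mathbf{0}$ separately by invoking that any non-trivial linear combination of an orthogonal system is a multivariate permutation polynomial, whereas you fold it into the $b_j=0$ branch of the main computation; and you explicitly dispose of the possibility $b_i\neq 0$ for some $i<j$, a case the paper's write-up glosses over (though the bound still holds there since the sum vanishes).
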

    \begin{proof}
        The first case is trivial, for the second and the third we recall that any non-trivial linear combination of an orthogonal system is a multivariate permutation polynomial, cf.\ \cite[7.39.~Corollary]{Niederreiter-FiniteFields}.
        Recall that for any multivariate permutation polynomial $f \in \Fq [x_1, \dots, x_n]$ the equation $f (x_1, \dots, x_n) = \alpha$ has $q^{n - 1}$ many solutions for every $\alpha \in \Fq$.
        So the exponential sum of the correlation collapses to
        \begin{equation*}
            q^{n - 1} \cdot \sum_{x \in \Fq} \chi (x) = 0,
        \end{equation*}
        which is zero by \cite[5.4. Theorem]{Niederreiter-FiniteFields}.

        Now let us assume that $a_j \neq 0$.
        Then we apply the triangular inequality to the variables $x_{j + 1}, \dots, x_n$ as follows
        \begin{align*}
            &\left| \sum_{\mathbf{x} \in \Fqn} \chi \Big( \big< \mathbf{a}, F(\mathbf{x}) \big> + \braket{\mathbf{b}, \mathbf{x}} \! \Big) \right| \\
            &= \left| \sum_{\mathbf{x} \in \Fqn} \chi \left( \sum_{i = j + 1}^{n} a_i \cdot f_i (\mathbf{x}) + b_i \cdot x_i \right) \cdot \chi \Big( a_j \cdot f_j (\mathbf{x}) + b_j \cdot x_j \Big) \right| \\
            & \leq \sum_{x_{j + 1}, \dots, x_n \in \Fq} \left| \chi \left( \sum_{i = j + 1}^{n} a_i \cdot f_i (\mathbf{x}) + b_i \cdot x_i \right) \cdot \sum_{x_1, \dots, x_j \in \Fq} \chi \Big( a_j \cdot f_j  (\mathbf{x}) + b_j \cdot x_j \Big) \right| \\
            &= \sum_{x_{j + 1}, \dots, x_n \in \Fq} \left| \chi \left( \sum_{i = j + 1}^{n} a_i \cdot f_i (\mathbf{x}) + b_i \cdot x_i \right) \right| \cdot \left| \sum_{x_1, \dots, x_j \in \Fq} \chi \Big( a_j \cdot f_j  (\mathbf{x}) + b_j \cdot x_j \Big) \right| \\
            &= \sum_{x_{j + 1}, \dots, x_n \in \Fq} \left| q^{j - 1} \cdot \sum_{x_j \in \Fq} \chi \Big( a_j \cdot f_j (x_j, \dots, x_n) + b_j \cdot x_j \Big) \right|
            = (\ast).
        \end{align*}
        For any fixed $(x_{j + 1}, \dots, x_n) \in \Fq^{n - j}$ we have that
        \begin{equation*}
            \hat{f}_j (x_j) = a_j \cdot f_j (x_j, \dots, x_n) + b_j \cdot x_j = a_j \cdot \big( p_j (x_j) \cdot \alpha + \beta \big) + b_j \cdot x_j,
        \end{equation*}
        where
        \begin{align*}
            \alpha &= g_j (x_{j + 1}, \dots, x_n) \in \Fqx, \\
            \beta &= h_j (x_{j + 1}, \dots, x_n) \in \Fq.
        \end{align*}
        If $b_j = 0$, then $\hat{f}_j$ is a univariate permutation polynomial in $x_j$.
        So the exponential sum inside the absolute value of $(\ast)$ must vanish for every $(x_{j + 1}, \dots, x_n) \in \Fq^{n - j}$.

        For $b_j \neq 0$, if $\degree{p_j} = 1$, then in principle $\hat{f}_j$ can be a constant polynomial.
        Since we do not know for how many $(x_{j + 1}, \dots, x_n) \in \Fq^{n - j}$ this happens we have to use the trivial bound.

        For the final case $\degree{p_j} > 1$, recall that we assumed
        \begin{equation*}
            \gcd \left( \degree{p_i}, q \right) = \gcd \left( \degree{p_i^{-1}}, q \right) = 1.
        \end{equation*}
        for all $1 \leq i \leq n$.
        So for every fixed $(x_{j + 1}, \dots, x_n) \in \Fq^{n - j}$ we can now apply \Cref{Lem: exponential sum} to bound the absolute value in $(\ast)$.
        This yields
        \begin{align*}
            (\ast)
            &\leq p^{j - 1} \cdot \sum_{k = j + 1}^{n} \sum_{x_k \in \Fq} \Big( \min \left\{ \degree{p_j}, \degree{p_j^{-1}} \right\} - 1 \Big) \cdot p^{1 / 2} \\
            &= p^{n - 1 / 2} \cdot \Big( \min \left\{ \degree{p_j}, \degree{p_j^{-1}} \right\} - 1 \Big),
        \end{align*}
        which concludes the proof. \qed
    \end{proof}

    Note if $q$ is a prime number and $f \in \Fq [x] / (x^q - x)$, then the coprimality condition is always satisfied.
    \begin{cor}\label{Cor: bound on linear probability}
        In the scenario of \Cref{Th: bound on correlation}, for any non-trivial additive character $\chi: \Fq \to \mathbb{C}$ one has
        \begin{equation*}
            \LP_\mathcal{F} (\chi, \mathbf{a}, \mathbf{b}) \leq
            \begin{dcases}
                1, & \mathbf{a}, \mathbf{b} = \mathbf{0}, \\
                0, &
                \left\{
                \begin{array}{l}
                    \mathbf{a} = \mathbf{0},\ \mathbf{b} \neq \mathbf{0}, \\
                    \mathbf{a} \neq \mathbf{0},\ \mathbf{b} = \mathbf{0},
                \end{array}
                \right. \\
                0, & b_j = 0, \\
                1, & b_j \neq 0,\ \degree{p_j} = 1, \\
                \frac{\bigg( \min \Big\{ \degree{p_j}, \degree{p_j^{-1}} \Big\} - 1 \bigg)^2}{q}, & b_j \neq 0,\ \degree{p_j} > 1.
            \end{dcases}
        \end{equation*}
    \end{cor}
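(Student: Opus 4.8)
The plan is to derive this corollary directly from \Cref{Th: bound on correlation} by squaring. Recall from \Cref{Def: correlation} \ref{Item: correlation} that the linear probability is defined as $\LP_\mathcal{F} (\chi, \mathbf{a}, \mathbf{b}) = \left| \CORR_\mathcal{F} (\chi, \mathbf{a}, \mathbf{b}) \right|^2$. Hence the entire statement is obtained by taking the case-by-case bounds on $\left| \CORR_\mathcal{F} (\chi, \mathbf{a}, \mathbf{b}) \right|$ established in \Cref{Th: bound on correlation} and squaring each of them.

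The only point that requires a word of justification is that squaring respects the inequalities. Since the left-hand side is an absolute value and every bound appearing in \Cref{Th: bound on correlation} (namely $0$, $1$, and $\frac{\min \{ \degree{p_j}, \degree{p_j^{-1}} \} - 1}{\sqrt{q}}$) is a nonnegative real number, the function $t \mapsto t^2$ is monotone increasing on the relevant range. Thus $\left| \CORR_\mathcal{F} (\chi, \mathbf{a}, \mathbf{b}) \right| \leq B$ with $B \geq 0$ immediately yields $\left| \CORR_\mathcal{F} (\chi, \mathbf{a}, \mathbf{b}) \right|^2 \leq B^2$. I would then simply read off the five cases: the bounds $1$ and $0$ are fixed points under squaring (so those rows carry over verbatim), while in the final case $b_j \neq 0$ and $\degree{p_j} > 1$ the bound becomes $\frac{\big( \min \{ \degree{p_j}, \degree{p_j^{-1}} \} - 1 \big)^2}{q}$.

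There is no genuine obstacle here; the corollary is purely a restatement of \Cref{Th: bound on correlation} under the definitional identity $\LP = \left| \CORR \right|^2$. The substantive work — the triangle-inequality splitting of the character sum over the triangular structure and the application of the Weil bound via \Cref{Lem: exponential sum} — has already been carried out in the proof of the theorem, so the corollary needs only the elementary observation about monotonicity of squaring recorded above.
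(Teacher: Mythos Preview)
Your proposal is correct and matches the paper's approach: the paper states the corollary with no proof, treating it as immediate from \Cref{Th: bound on correlation} via the definitional identity $\LP_\mathcal{F} = |\CORR_\mathcal{F}|^2$. One nitpick: the linear probability is item~(2) of \Cref{Def: correlation}, not the item labelled \ref{Item: correlation} (which is the correlation itself), so adjust the cross-reference accordingly.
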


    Analog to the differential uniformity, let us compare \Cref{Cor: bound on linear probability} to the SPN $S$ from \Cref{Equ: SPN} with the additional assumptions that $\gcd \left( \degree{p_i}, q \right) = \gcd \left( \degree{p_i^{-1}}, q \right) = 1$ and $1 < \degree{p_i} \leq d$ or $1 < \degree{p_i^{-1}} \leq d$ for all $1 \leq i \leq n$.
    It is well-known that
    \begin{equation}
        \LP_S \left( \chi, \mathbf{a}, \mathbf{b} \right) \leq
        \begin{dcases}
            0, &
            \begin{array}{l}
                \exists i \! : a_i \neq 0,\ b_i = 0, \\
                \exists i \! : a_i = 0,\ b_i \neq 0,
            \end{array}
            \\
            \left( \frac{(d - 1)^2}{q} \right)^{\wt \left( \mathbf{a} \right)}, & else.
        \end{dcases}
    \end{equation}
    Since this probability decreases with $\mathcal{O} \left( q^{- \wt \left( \mathbf{a} \right)} \right)$ it might in principle be suitable to estimate the linear hull of a SPN cipher.
    On the other hand, our bound from \Cref{Cor: bound on linear probability} if non-trivial is always in $\mathcal{O} \left( q^{-1} \right)$.
    While it is still possible to estimate the probability of linear trails of a GTDS cipher with this bound, it is not suitable to estimate the linear hull of a GTDS cipher.

    Analog to the differential uniformity bound for the GTDS, we do not expect that \Cref{Th: bound on correlation} and \Cref{Cor: bound on linear probability} will be of great use for binary designs with $q = 2^m$.
    First, the polynomial degree is again the main ingredient of the bound which can be close to $q$ for binary designs.
    Second, in characteristic $2$ the coprimality condition restricts us to univariate permutation polynomials of odd degree.
    In particular, \Cref{Th: bound on correlation} cannot be applied to $x^{2^m - 2}$.
    On the other hand, \Cref{Th: bound on correlation} is also tailored for application to arithmetization-oriented designs over prime fields.
    For prime fields the coprimality condition is always satisfied, and the univariate permutation polynomials in these designs have a suitable small polynomial degree compared to $q$.

    In particular, we want to highlight that \Cref{Th: bound on correlation} has been applied in the linear cryptanalysis of \Arion \cite[\S 3.1]{Arion}.

    \section{Discussion}
    \subsection{Algebraic Frameworks Beyond the GTDS}
    It is worth noting that our GTDS framework is not the first attempt to unify block cipher design strategies.
    In \cite{Yun-QuasiFeistel} the quasi-Feistel cipher idea was introduced. It provides a unified framework for Feistel and Lai--Massey ciphers.
    While our approach utilizes the full algebraic structure of finite fields, the quasi-Feistel cipher uses a contrarian approach by requiring as little algebraic structure as possible.
    In particular, they demonstrate that invertible Feistel and Lai--Massey ciphers can be instantiated over quasigroups (cf.\ \cite{Smith-QuasiGroups}).
    Furthermore, this little algebraic structure is already sufficient to prove theoretical security bounds in the Luby-Rackoff model for quasi-Feistel ciphers.

    \subsection{Hash Functions}
    Our analysis and discussion have been focused on iterative permutations. For most instantiations of known hash functions, a (fixed key) permutation is used to build the compression function.
    Then, iterating the compression function a hash function is built over an arbitrary domain. Our generic description of (keyed) permutations may be viewed as a vector of functions over $\F_q^n$.
    Thus, such permutations can be used to define a hash function $H: \F_q^* \rightarrow \F_q^t$ where the domain of $H$ is of arbitrary length over $\F_q$ and the hash value is of length $t > 0 $ over $\F_q$.
    For example, an instantiation of GTDS-based permutations can be used in a sponge mode \cite{Sponge,EC:BDPV08} to define such a hash function.
    Thus, all our analysis can be easily extrapolated to hash functions.

    \subsection{Beyond Permutations}
    The different conditions on the polynomials defining the GTDS are imposed to ensure that the resulting system is invertible. However, these conditions can be dropped if the goal is not to construct a permutation but possibly a pseudo-random function. Potentially, such a GTDS (without the necessary constraints for invertibility) can be used to construct PRFs over $\Fq$ and is an interesting direction for future work.

    \subsubsection*{Acknowledgments.}
    Matthias Steiner was supported by the KWF under project number KWF-3520|31870|45842.

    \bibliographystyle{splncs04.bst}
    \bibliography{abbrev0.bib,crypto.bib,literature.bib}

\end{document}